\definecolor{mygreen}{RGB}{28,172,0} 
\definecolor{mylilas}{RGB}{170,55,241}
\theoremstyle{plain}                   
\newtheorem{theorem}{\bf Theorem}[section]
\newtheorem{lemma}[theorem]{\bf Lemma}
\newtheorem{definition}[theorem]{\bf Definition}
\newtheorem{remark}[theorem]{\bf Remark}
\newtheorem{proposition}[theorem]{ Proposition}
\newtheorem{assumption}[theorem]{\bf Assumption}
\DeclareMathOperator\supp{supp}
\newcommand{\bE}{\mathbb E}
\newcommand{\F}{{\cal F}}
\newcommand{\bF}{{\mathbb F}}
\newcommand{\M}{{\cal M}}
\newcommand{\bN}{\mathbb N}
\newcommand{\X}{{\cal X}}
\newcommand{\reals}{\mathbb{R}}
\newcommand{\comp}{\raise 1pt \hbox{$\scriptstyle\circ$}}
\newcommand{\upto}{{\raise 1pt \hbox{$\scriptstyle \,\nearrow\,$}}}
\newcommand{\downto}{{\raise 1pt \hbox{$\scriptstyle \,\searrow\,$}}}
\newcommand*\samethanks[1][\value{footnote}]{\footnotemark[#1]}
\newcommand{\black}[1]{\textcolor{black}{#1}}
\newcommand{\blue}[1]{\textcolor{black}{#1}}
 \title {Detecting asset price bubbles using deep learning}
 \author{Francesca Biagini\thanks{Workgroup Financial and Insurance Mathematics, Department of Mathematics, Ludwig-Maximilians Universit{\"a}t, Theresienstrasse 39, 80333 Munich, Germany. Emails: biagini@math.lmu.de,  meyer-brandis@math.lmu.de.} \and  Lukas Gonon\thanks{Department of Mathematics, Imperial College London, London, SW7 1NE UK. Email: l.gonon@imperial.ac.uk} \and Andrea Mazzon\thanks{Department of Economics, University of Verona, 37129 Verona, Italy. Email: andrea.mazzon@univr.it}  \and Thilo Meyer-Brandis \samethanks[1]}
\begin{document}

\maketitle

\begin{abstract}
In this paper we employ deep learning techniques to detect financial asset bubbles by using observed call option prices. The proposed algorithm is widely applicable and model-independent. We test the accuracy of our methodology in numerical experiments within a wide range of models and apply it to market data of tech stocks in order to assess if asset price bubbles are present. Under a given condition on the pricing of call options under asset price bubbles, we are able to provide a theoretical foundation of our approach for positive and continuous stochastic asset price processes. When such a condition is not satisfied, we focus on local volatility models. To this purpose, we give a new necessary and sufficient condition for a process with time-dependent local volatility function to be a strict local martingale.    
\end{abstract}

\section{Introduction}
The study of methodologies for detecting asset price bubbles has attracted an increasing interest over the last years in order to prevent, or mitigate, the financial distress often following their burst. \black{A financial bubble is defined as the temporary deviation of the market price of an asset from its fundamental value. }
In the context of the martingale theory of bubbles (see among others \cite{cox2005local}, \cite{LoewensteinWillard}, \cite{JarrowProtter2007},   \cite{JarrowProtter2010}, \cite{jarrow2011detect}, \cite{Biagini}, \cite{Protter2013}), the market price of a (discounted) asset represents a bubble if it is given by a strict local martingale. In this framework, detecting an asset price bubble amounts to determine if the stochastic process modelling the discounted asset price is a true martingale or a strict local martingale.

Our contribution is part of a wide range of works on asset price bubbles detection. The problem is tackled within the framework of local volatility models in the papers  \cite{jarrow2016testing}, \cite{jarrow2011detect} and \cite{jarrow2011there}, via volatility estimation techniques. In \cite{fusari2020testing}, under the joint assumption of \emph{no free lunch with vanishing risk} (NFLVR) and \emph{no-dominance}, the authors pursue this goal by looking at the differential pricing between put and call options, with the motivation that traded call and put options reflect price bubbles differently. On the other hand, a nonparametric estimator of asset price bubbles with the only assumption of NFLVR is proposed in \cite{jarrow2021inferring}, using a cross section of European option prices. The approach is based on a nonparametric identification of the state-price distribution and the fundamental value of the asset from option data. In \cite{piiroinen2018asset}, the authors introduce a statistical indicator of asset price bubbles based on the bid and ask market quotes for the prices of put and call options. This methodology is in particular applied to the detection of asset price bubbles for SABR dynamics. Finally, an asymptotic expansion of the right wing of the implied volatility smile is used in \cite{jacquier2018implied} to determine the strict local martingale property of the underlying.

In this paper, we propose \black{a new approach that formulates bubble detection as a supervised learning problem. The learning problem is solved using neural networks that take as inputs observed call option prices}. \black{In this way} we provide a theoretical foundation and a model-free deep learning approach \black{for bubble detection, which is new in the literature}. More precisely, we feed the network with a set of training data whose $i$-th data point consists of analytical call option prices for different strikes $K_1, \dots, K_n$ and maturities $T_1, \dots, T_m$ \black{written on} an underlying process $X^i$ with known dynamics, together with an indicator specifying if $X^i$ is a strict local martingale or a true martingale. %We then make the network assess if a new underlying (ideally coming from market data) has a bubble or not looking at the call option prices it produces for strikes $K_1, \dots, K_n$ and maturities $T_1, \dots, T_m$.
To assess if a new stock (possibly corresponding to market data) has a bubble or not, we evaluate the trained neural network at
(market) prices of call options on this underlying for strikes $K_1, \dots, K_n$ and maturities $T_1, \dots, T_m$.
The output \blue{of} the network is in \blue{$(0,1)$} and indicates the probability that the underlying exhibits a bubble.
Slightly changing the algorithm, we are also able to estimate the size of the bubble.

\black{In this manner the methodology does not attempt to detect a bubble based on the trajectory of the underlying, but instead uses the information contained in all call option prices available at a given point in time for different maturities and strikes to recover information on the asset's price probability distribution. In fact, the (theoretical) observation of call option prices for all strikes and maturities determines full information about the distribution of the underlying by the Breeden-Litzenberger formula. The method aims to exploit the information contained in call option prices in the best possible way to provide an estimate for the probability that an asset price bubble is present.}

\black{Formulating bubble detection as a supervised learning problem leads to a purely sample-based procedure. The only information used when training the neural network is a collection of call option prices associated to the process $X^i$ and an indicator stating if these call option prices correspond to a price process with a bubble or not. Thus, we can arbitrarily combine different models for bubbles and could also enhance synthetic model training data by historical data.}

One of the main advantages of our approach is that it does not require the direct estimation of any parameter or quantity related to the asset price process, and that it is model independent. In particular, we show via numerical experiments that our method works not only within a class of models, but also if the network is trained using a certain kind of stochastic processes (like local volatility models) and tested within another class (for example, stochastic volatility models). 
We also test the method with market data associated to assets \black{that have been suspected to be involved in the new tech bubble} (see among others \black{ \cite{fusari2020testing}, \cite{piiroinen2018asset}, \cite{kolakowski}, \cite{libich2021bitcoin}}, \cite{Bercovici}, \cite{Ozimek}, \cite{Seria}, \cite{Sharma}). According to the neural network\blue{,} output bubbles were present in these assets with high probability at certain points in time, and these dates match retrospectively expected results.

Our motivation for this method is manifold: on the one hand, the price of call options on a bubbly underlying has an additional term which is added to the usual risk neutral valuation due to a collateral requirement represented by a constant $\alpha \in [0,1]$, see \cite{cox2005local} and Theorem \ref{thm:coxresult} in our paper. Looking at call option prices, it is then theoretically possible to assess if the underlying has a bubble by identifying the presence of this term. On the other hand, in the case when the underlying price follows a local volatility model, a modified version of Dupire's formula stated in Theorem 2.3 of \cite{ekstrom2012dupire} permits to theoretically recover the local volatility function, which is crucial to determine if the process is a strict local martingale or a true martingale, from the observation of call option prices. 

%In particular, in Proposition \ref{prop:existenceF} we prove the existence of a function $F$ from the space of prices of call option prices written on local volatility models to the set $\{0,1\}$, whose value is $1$ if and only if the underlying has a bubble. 
In this way we are able to provide a theoretical foundation for our method. In particular, in Theorems \ref{theo:existencenetwork} and \ref{theo:existencenetworklocvol} we prove the existence of a sequence of neural networks approximating a theoretical ``bubble detection function'' $F$, {under the assumption $\alpha<1$ for underlyings given by continuous and positive stochastic processes and in the general case $\alpha \in [0,1]$ for local volatility models.} The function $F$ maps from the space $\X$ of call option prices as functions of  strike and maturity to $\{0,1\}$, being $1$ if and only if the underlying process is a strict local martingale. 
%This result is proved  for a general collateral requirement constant $\alpha \in [0,1]$ in the case when the underlying is represented by a local volatility model, and also holds for more general processes if $\alpha<1$. 
Several steps are required in order to prove {Theorems \ref{theo:existencenetwork} and \ref{theo:existencenetworklocvol}}. First, we show the existence of the function $F$ introduced above and show that it is measurable with respect to a natural topology on $\X$, see {Propositions \ref{prop:existenceF} and \ref{prop:existenceFlocal}}. In the case when $\alpha=1$, we consider the class of local volatility models under a fairly general assumption on the local volatility function. Specifically, we use some new findings providing a sufficient and necessary condition for a stochastic process with time dependent local volatility function to be a strict local martingale. The main result of this analysis is stated in Theorem \ref{thm:Xmartingaleifconditionforanyt} and is of independent interest, since it is a generalisation of  Theorem 8 of \cite{jarrow2011detect}. In {Propositions \ref{prop:existenceapproximation} and \ref{prop:existenceapproximationlocvol}} we then construct a sequence of measurable functions $(F^n)_{n \ge 1}$ approximating $F$ pointwise, where $F^n:\mathbb{R}^n \to [0,1]$ takes the prices of a call option with fixed underlying for $n$ different strikes and maturities and outputs the likelihood that the underlying has a bubble according to the observation of these prices. {These propositions are} then used together with the universal approximation theorem, see \cite{hornik1991}, to prove {Theorems \ref{theo:existencenetwork} and \ref{theo:existencenetworklocvol}, respectively}.

The paper is structured as follows. Section \ref{sec:theoretical} is devoted to the theoretical foundation of our approach. In particular, \black{in Section \ref{sec:bubbles} we briefly recall the martingale theory of asset price bubbles,} in Section \ref{sec:options} we introduce call option pricing in presence of financial asset bubbles, and in \ref{sec:neuralapprox} we  prove existence of the test function $F$ and of its neural network approximation when $\alpha<1$, when the asset price is given by a continuous and positive stochastic process.  On the other hand, in Section \ref{sec:localvolatilitymodels} we focus on local volatility models to cover the general case $\alpha \in [0,1]$: in Section \ref{sec:strictlocalvol} we provide sufficient and necessary conditions for the strict local martingale property of local volatility models, whereas in Section \ref{sec:neuralapproxlocal} we give results analogous to the ones in Section \ref{sec:neuralapprox} under this new setting.

In Section \ref{sec:numexperiments} we test the performance of our methodology via numerical experiments. Specifically, \black{we introduce our methodology in Section \ref{sec:nummethodology} and} in Section \ref{sec:localvol} we both feed and test the network within local volatility models, with different, randomly chosen parameters, whereas in Section  \ref{sec:stochvol} we feed the network with data coming from stochastic volatility models and test it with local volatility models, and vice-versa. \black{In Section \ref{sec:netarch} we comment on the choice of the network architecture}. In Section \ref{sec:marketdata} we apply our approach to the analysis of \black{four} stocks that have been claimed to be affected by a bubble in recent years, namely Nvidia, \black{Meta, GameStop and} Tesla. In Section \ref{sec:conclusions} we present some conclusions.

\section{Theoretical foundation}\label{sec:theoretical}
\subsection{Asset price bubbles}\label{sec:bubbles}
\black{In the literature there have been many approaches proposed for the mathematical modelization of asset price bubbles.}

 \black{From an economic point of view, the main challenge  consists in explaining how bubbles are generated at the  microeconomic level by the interaction of market participants; see  \cite{tirole},  \cite{hk},  \cite{DeLong},  \cite{Scheinkman},   \cite{Abreu}, \cite{foellmer} and the references therein.
From a mathematical point of view, asset price bubbles are mainly studied by using the \emph{local martingale framework}:  \cite{LoewensteinWillard},  \cite{cox2005local},  \cite{JarrowMadan}, \cite{JarrowProtter2007}, \cite{JarrowProtter2010},   \cite{JarrowKchiaProtter}, \cite{JarrowProtter2009}, \cite{JarrowProtter2011}, \cite{prrel}, \cite{PalProtter}, \cite{KardarasKreherNikeghbali},  \cite{Biagini}. 
 Another definition has been proposed in  \cite{herdegen2016strong}, where the fundamental value is assumed to be the superreplication price of the asset.
 In  \cite{JarrowProtter2012}, \cite{BiaginiNedelcu}  and  \cite{biagini_maz_mb}  constructive models are proposed to take in account triggering factors for bubble formation and bubble bursting. 
 For a survey, we refer to \cite{Protter2013}.}

Here  we follow the approach of \cite{JarrowProtter2009}, \cite{JarrowKchiaProtter}. The notion of an asset price bubble has two ingredients: the observed \emph{market price} $S$ of a given financial asset and the asset's \emph{fundamental value}. 
Assume discounting equal $1$. Given an equivalent martingale measure $Q$ for $S$, the fundamental value $S^Q$ is usually defined as the expected sum of future dividends under $Q$. 
 The bubble perceived under $Q$ is defined as the difference between the two:
$$\beta^Q= S -S^Q.$$
In this setting, we have a bubble if and only if $S$ is a strict local martingale under $Q$. In the sequel, we present  a new method to detect financial asset prices' bubbles by testing for strict \blue{local} martingale properties based on observations of call option prices and 
by applying machine learning techniques.

\subsection{European call options under asset price bubbles}\label{sec:options}

Let $(\Omega, P, \F)$ be a probability space endowed with a filtration $\bF = (\F_t)_{t \ge 0}$ satisfying the usual hypothesis. It is well known that the risk-neutral valuation of European options when the underlying asset price has a bubble (i.e., when it is a strict local martingale) does not satisfy the put-call parity, see for example \cite{JarrowProtter2010}. An alternative way to price call options for a bubbly underlying, also supported by market data, has been proposed in \cite{cox2005local}. 

We start with the following definition.

\begin{definition}\label{def:ourcallprice} 
Introduce an underlying given by a continuous, positive stochastic process $X=(X)_{t \ge 0}$, and fix a pricing measure $Q \sim P$ under which $X$ is a local martingale. The price $C^{\alpha, X}(T,K)$ at time $t=0$ of a call option of maturity $T$ and strike $K$  written on $X$ with collateral constant $\alpha \in [0, 1]$ is given by
\begin{equation}\label{eq:pricecallcollateralized}
C^{\alpha, X}(T,K) = \bE^Q[(X_T-K)^+]+ \alpha m(T),
\end{equation}
where 
\begin{equation}\label{eq:martingaledefect}
m(T) := X_0 - \bE^Q[X_T].
\end{equation}
 \end{definition}
   The term $m(T)$ in \eqref{eq:martingaledefect} is called the \emph{martingale defect} of $X$. Note that $m(T)>0$ if $X$ is a strict local martingale bounded from below and $m(T)=0$ if $X$ is a martingale.
   
\begin{remark}
Note that if for $t\in [0,T]$ we set
$$C^{\alpha,X}_t(T,K):= \bE^Q[(X_T-K)^{+}| \mathcal{F}_t] +\alpha m_t(T)$$
for $m_t(T)=X_t - \bE^Q[X_T| \mathcal{F}_t]$, then $C^{\alpha,X}_t(T,K)$, $t\in [0,T]$, is a non-negative local martingale, which equals $(X_T -  K)^{+}$ at final time.
Hence $C^{\alpha,X}_t(T,K)$ represents an arbitrage-free price at time $t$ for the call option $(X_T-K)^{+}$.
\end{remark}

%{Definition \ref{def:ourcallprice} is also valid for incomplete markets. In this case, $Q$ can be interpreted as the pricing measure \emph{chosen by the market} as in \cite{JarrowProtter2010}. 
If the market is complete, a financial interpretation of \eqref{eq:pricecallcollateralized}-\eqref{eq:martingaledefect} where $\alpha \in [0,1]$ assumes the meaning of \emph{collateral requirement} comes from \cite{cox2005local}, where the following definition is given.

\begin{definition}\label{def:cox}
 The price at time $t=0$ of a call option $C=(X_T-K)^+$ of maturity $T$ and strike $K$ written on a continuous, positive underlying $X=(X_t)_{t \ge 0}$ is the smallest initial value $V_0$ of a superreplicating portfolio $V =(V_t)_{t \ge 0}$ satisfying the admissibility condition 
 \begin{equation}\label{eq:coxpric}
 V_t \ge \alpha(X_t-K)^+
 \end{equation}
for all $t \in [0,T],$  where $\alpha \in [0, 1]$ is a constant representing the \emph{collateral requirement}. 
  \end{definition}
  Condition \eqref{eq:coxpric} is always satisfied when the underlying asset has no bubble. For {complete market models} we have the following result, given in \cite{cox2005local}.

 \begin{theorem}\label{thm:coxresult} 
{Introduce an underlying given by a continuous, positive stochastic process $X=(X)_{t \ge 0}$, and assume the existence of a unique probability measure $Q \sim P$ under which $X$ is a local martingale}. The price {according to Definition \ref{def:cox}} at time $t=0$ of a call option of maturity $T$ and strike $K$  written on $X$ and with collateral requirement $\alpha \in [0, 1]$ is given by equation \eqref{eq:pricecallcollateralized}.
 \end{theorem}
Such results can be extended to c\`adl\`ag processes, see  \cite{cox2005local}. 
 \begin{remark}
Under a condition known as \emph{no dominance}, the price of a call option on a bubbly asset is given by formula \eqref{eq:pricecallcollateralized} with $\alpha=1$ and satisfies the put-call parity, see Theorem 6.2 of \cite{JarrowProtter2010}. Our setting, where $\alpha \in [0, 1]$, includes the framework of \cite{JarrowProtter2010} as a particular case. {Moreover, our method does not need to know which value of $\alpha$ is taken to price the option.}
 \end{remark}

\subsection{Neural network approximation for collateral $\alpha \in [0,1)$}\label{sec:neuralapprox}
In this section we fix a collateral constant $\alpha \in [0,1)$ and suppose that the price of a call option is given by \eqref{eq:pricecallcollateralized}, with continuous and positive underlying $X$. For the sake of simplicity, from now on we write $C^{X}$ for the call option price. We treat the case $\alpha=1$ in Section \ref{sec:localvolatilitymodels} in the specific setting of local volatility models.  Our aim is to find a way of detecting asset price bubbles by using neural networks, as we explain in the sequel. {From now on, we fix the ``pricing measure" $Q \sim P$}.

%Our aim is to prove the existence of a sequence of neural network maps $(\hat{F}^n)_{n \in \bN}$, $\hat{F}^n \colon (0,\infty)^n \to [0,1]$, see Definition \ref{def:neuralnetwork}, with the following property: if $x$ is a vector of call option prices written on an underlying $X$ satisfying \eqref{eq:localvolatility}, with collateral constant $\alpha$ and maturities and strikes $(T_i,K_i)$, $i=1,\dots,n$, chosen well enough, then $\hat{F}^n(x) \approx 1$ if $X$ is a strict local martingale under the pricing measure $Q$ and $\hat{F}^n(x) \approx 0$ otherwise, with accuracy increasing with $n$ and tending to $100\%$ for $n \to \infty$. Once such a sequence is given, we can use $\hat{F}^n$ to detect bubbles: if we are given observed market prices of call options $\hat{C}(T_i,K_i)$, $i=1,\dots, n$, we evaluate  $\hat{F}^n(\hat{C}(T_1,K_1),\ldots,\hat{C}(T_n,K_n))$ to detect if there is a bubble or not. 

%From now on, we assume without loss of generality that the option prices are given by equation \eqref{eq:pricecallcollateralized} with $\alpha = 1$.  
%
 We start our analysis with the following result. 

\begin{proposition}\label{prop:existenceF}
{Fix a constant $x_0>0$}. Consider the spaces
\begin{align*}
  \M_{loc}=\bigl\{
       X=(X_t)_{t \ge 0} \text{ }
   & 
   \text{continuous and positive local martingale on $(\Omega, \F, Q, \bF)$ {with $X_0=x_0$}}
   %&    \text{{such that the function $t \to \bE^Q[(X_t-K)^+]$ is continuous for any $K \ge 0$}}
    \bigr\}
\end{align*}
and 
{\begin{equation}\label{eq:localmartingalesspace}
\M_{L}= \{\text{$X \in \M_{loc}$: $X$ is a strict local martingale}\}.
\end{equation}}
Introduce the set
\begin{equation}\label{eq:mathcalX}
\X := \left\{C^X: \text{ } X \in \M_{loc} \right\} {\cap C([0,\infty) \times (0,\infty),[0,\infty)})
\end{equation}
endowed with the Borel sigma-algebra of ${C([0,\infty) \times (0,\infty),[0,\infty))}$. Then there exists a measurable function $F \colon \X \to \{0,1\}$ such that $F(C^X) = \mathbbm{1}_{\{X \in \M_L\}}$. 

In particular,  
 \begin{equation}\label{eq:Ftodectedbubble1}
 F(f)\black{=} \mathbbm{1}_{\{\text{there exists $t \ge 0$ such that $\lim_{x \to 0}f(T,x)<X_0$ for any $T \ge t$}\}}, \quad f \in \X.
 \end{equation}
\end{proposition}

\begin{proof}
Let $X \in  \M_{loc}$. From \eqref{eq:pricecallcollateralized} and \eqref{eq:martingaledefect} we get
\begin{align}\label{eq:limitto0}
\lim_{K \to 0}C^{X}(T,K) &= \lim_{K \to 0} {\bE^Q[(X_T-K)^+]} + \alpha m(T) =  \bE^Q[X_T]+ \alpha(X_0 - \bE^Q[X_T]) \notag \\
&=  \alpha X_0 + \bE^Q[X_T](1-\alpha),
\end{align}
by the Dominated Convergence Theorem since $X$ is a supermartingale and therefore integrable {for any $t \geq 0$}.
In particular, since $X$ is a supermartingale, equation \eqref{eq:limitto0} implies that it is a strict local martingale if and only if there exists {$t > 0$} such that 
$$
\lim_{K \to 0}C^{X}(T,K)<X_0 \text{ for any {$T \ge t$}}.
$$
 Therefore,  we have 
 $$
 F(C^X) = {\mathbbm{1}_{\{X \in \M_{L}\}}} \text{ for any $C^{X} \in \X$}
 $$
 for $F$ given in \eqref{eq:Ftodectedbubble1}. In order to prove that $F$ is measurable, note that
  $$
 F(f)= 1- \prod_{t \in \bN}\bar F_t(f), \quad f \in \X,
 $$
 with $\bar F_t : \X \to \{0,1\}$  defined by 
 $$
 \bar F_t(f):=\mathbbm{1}_{\{\lim_{x \to 0}f(t,x) \ge X_0\}}=\mathbbm{1}_{\{\lim_{x \to 0}\bar{G}_{t,x}(f) \ge 0\}}, \quad t \in \bN, \quad f \in \X,
 $$ 
 where $\bar{G}_{t,x} : \X \to \reals$ is given by $\bar{G}_{t,x}(f):=f(t,x)-X_0$. As $\bar{G}$ is measurable, then also $\lim_{x \to 0}\bar{G}_{t,x}$ is measurable, and hence $\bar F_t$ and consequently $F$ are also measurable.
\end{proof}
We now prove the existence of a sequence of functions $(F^n)_{n \in \bN}$, $F^n \colon \reals^n \to [0,1]$, such that $F^n$ approximates, in some way we specify, the function $F$ from \eqref{eq:Ftodectedbubble1}.

 \begin{proposition}\label{prop:existenceapproximation}
 Define the set
%\begin{equation}\label{eq:setofmaturitiesandstrikes}
%A = \begin{cases}
%[0,\infty) \times (0,\epsilon] & \quad \text{if $\alpha<1$}, \\
%[0,\infty) \times [C,\infty) & \quad \text{if $\alpha=1$}.
%\end{cases}
%\end{equation}
\begin{equation}\label{eq:setofmaturitiesandstrikes}
A :=[0,\infty) \times (0,\epsilon]
\end{equation}
for some fixed $\epsilon>0$.

Let $(T_n, K_n)_{n \in \bN}$  be a sequence of maturities and strikes such that the set $ \{(T_n, K_n), n \in \bN\}$ is dense in $A$. Let $\mu$ be a probability measure on $\X$  such that there exists a probability measure $\nu$ on $\X$ with
	\begin{equation}\label{eq:measurenotzero}
	\nu\left(\left\{f\in \X \text{ such that } f(T_i, K_i) = g(T_i,K_i)\text{ for all $i \in \bN$}\right\}\right)>0 \quad \text{for any $g \in \supp(\mu)$}.
	\end{equation}
Then there exists a sequence of functions $(F^n)_{n \in \bN}$, $F^n \colon \reals^n \to [0,1]$ such that%which converges to $F$ pointwise, i.e., such that for any fixed $\sigma \in \Sigma$ it holds
\begin{equation}\label{eq:limitproposition}
\int_{\X} \left| F^n(g(T_1,K_1),\ldots,g(T_n,K_n))-F(g) \right|^pd\mu(g)  \xrightarrow[n \to \infty]{}0,
\end{equation}
for any $p \in [1,\infty)$, where $F \colon \X \to \{0,1\}$  is the function introduced in \eqref{eq:Ftodectedbubble1}.
%In particular, given a probability measure $\mu$ on $(\Sigma,\B^{\Sigma})$, for any $\varepsilon >0$ there exists $n_0 \in \bN$ such that for all $n \geq n_0$ we have 
%\begin{equation}\label{eq:resultconvergence}
%\int_{\Sigma}|F^n(C^\sigma(T_1,K_1),\ldots,C^\sigma(T_n,K_n))  - F(C^\sigma)| \mu(d \sigma) <\varepsilon.
%\end{equation}
For fixed $n \in \mathbb{N}$, the function $F^n$ can be chosen as follows:
%	\begin{equation}\label{eq:Fn}
%	F^n(c):= \frac{1}{|S_n(c)|} \sum_{\tilde \sigma \in S_n(c)} F\left(C^{\tilde{\sigma}}\right)
%	\end{equation}
%	where
%	\begin{equation}\label{eq:Snc}
%	S_n(c) := \left\{ \tilde \sigma \in \Sigma \text{ such that } C^{\tilde{\sigma}}(T_i, K_i) = c_i \text{ for all $i = 1, \dots, n$}\right\}.
%	\end{equation}
\begin{align}\label{eq:Fn}
	F^n(c^n):= \begin{cases}
	\frac{1}{\nu\left(S_n(c^n)\right)} \int_{S_n(c^n)}F(f)d\nu(f) \quad &\text{if $\nu\left(S_n(c^n)\right) > 0$},\\
	0 \quad &\text{otherwise},
	\end{cases}
	\end{align}
		$c^n \in \reals^n$, where
	\begin{equation}\label{eq:Snc}
	S_n(c^n) := \left\{f\in \X \text{ such that } f(T_i, K_i) = c^n_i \text{ for all $i = 1, \dots, n$}\right\}, \quad c^n \in \reals^n.
	\end{equation}
	\end{proposition}
\begin{proof}
 Let $(T_n, K_n)_{n \in \bN}$  be a sequence of maturities and strikes such that the set $ \{(T_n, K_n), n \in \bN\}$ is dense in $A$ given in \eqref{eq:setofmaturitiesandstrikes}. 
 Let $\mu$ be the probability measure on $\X$ introduced above and fix $g \in \supp(\mu)$. Also let $\nu$ be a probability measure on $\X$ which satisfies \eqref{eq:measurenotzero}. For any $n \in \bN$, consider the function $F^n \colon \reals^n \to [0,1]$ defined in 
 \eqref{eq:Fn} and set 
 $$
 c^{g,n} :=\left(g(T_1,K_1), \dots, g(T_n,K_n)\right).
 $$ 
We first want to show that 
 \begin{equation}\label{eq:FntoFinfinity}
 F^n(c^{g,n}) \xrightarrow[n \to \infty]{} F^{\infty}(c^{g, \infty}),
 \end{equation}
  with
  $$
c^{g, \infty}:=\{ g(T_i,K_i), \quad i \in \bN\}
$$
 and
 $$
F^{\infty}(c^{g, \infty}) :=  \frac{1}{\nu(S_{\infty}(c^{g,\infty}))} \int_{S_{\infty}(c^{g, \infty})}F(f)d\nu(f),
$$
where 
\begin{equation}\notag
	S_{\infty}(c^{g, \infty}) := \left\{f\in \X \text{ such that } f(T_i, K_i) = g(T_i,K_i)\text{ for all $i \in \bN$}\right\}
	\end{equation}
 satisfies $\nu(S_{\infty}(c^{g,\infty})) \ne 0$ by \eqref{eq:measurenotzero}.
Since $S_{n+1}(c^{g,n+1}) \subset S_n(c^{g,n})$ for any $n \in \mathbb{N}$ and $S_{\infty}(c^{g, \infty})= \bigcap\limits_{n \in \bN} S_n(c^{g,n})$, we have
\begin{equation}\label{eq:convergenceindicators}
\mathbb{1}_{S_n(c^{g,n})}  \xrightarrow[n \to \infty]{}\mathbb{1}_{S_{\infty}(c^{g,\infty})} \quad \text{pointwise}
\end{equation}
and
\begin{equation}\label{eq:convergencemeasures}
\nu\left(S_n(c^{g,n})\right)\xrightarrow[n \to \infty]{}\nu\left(S_{\infty}(c^{g, \infty})\right).
\end{equation}
By \eqref{eq:convergenceindicators} and since $|F| \le 1$, we can apply the Dominated Convergence Theorem and get
\begin{equation}\label{eq:convergenceintegrals}
\int_{S_n(c^{g,n})}F(f)d\nu(f) \xrightarrow[n \to \infty]{}\int_{S_{\infty}(c^{g, \infty})}F(f)d\nu(f).
\end{equation}
Putting together \eqref{eq:convergencemeasures} and \eqref{eq:convergenceintegrals}, we obtain \eqref{eq:FntoFinfinity}.

%Suppose first $\alpha=1$ and let $C>0$ be the large constant of the statement of the proposition. 
Moreover, as the set $ \{(T_n, K_n), n \in \bN\}$ is dense in $A$ by assumption, $g$ is continuous and every function in $\X$ is continuous as well, we have that
$$
S_{\infty}(c^{g, \infty})= \left\{f\in \X \text{ such that } f(T, K) = g(T,K) \text{ for all $(T,K) \in A$}\right\}.
$$
By the definition of $F$ in \eqref{eq:Ftodectedbubble1}, this implies that $F(f)=F(g)$ for any $f \in S_{\infty}(c^{g, \infty})$, and so $F^{\infty}(c^{g, \infty})=F(g)$.
By \eqref{eq:FntoFinfinity}, it thus follows that
$$
F^n(g(T_1,K_1),\ldots,g(T_n,K_n)) \xrightarrow[n \to \infty]{} F(g).
$$
We then obtain \eqref{eq:limitproposition} by the Dominated Convergence Theorem, since $(F^n)_{n \in \mathbb{N}}$  is bounded by $1$.
\end{proof}
 \begin{remark}
 Any discrete probability measure $\mu$ on $\X$, i.e. any probability measure $\mu$ such that $\supp(\mu) \subseteq \X_0$ for a countable set $\X_0 \subset \X$, satisfies the requirement of Proposition \ref{prop:existenceapproximation} that there exists a probability measure $\nu$ on $\X$ for which \eqref{eq:measurenotzero} holds. An example is given by
 $$
 \nu := \frac{1}{ \sum_{n \in \mathbb{N}} n^{-2}} \sum_{n \in \mathbb{N}} \delta_{g_n}n^{-2}, 
 $$
  where $\X_0:=\{g_n, n \in \mathbb{N}\}$. Indeed, for any $g \in \supp(\mu)$ there exists $n \in \mathbb{N}$ such that $g=g_n$, thus  $$
  \nu\left(\left\{f\in \X \text{ such that } f(T_i, K_i) = g(T_i,K_i)\text{ for all $i \in \bN$}\right\}\right) \ge   \nu\left(\left\{g_n\right\}\right) >0.
  $$ 
 %If the probability measure $\mu$ introduced in Proposition \ref{prop:existenceapproximation} is discrete, i.e., $\supp(\mu) \subseteq \Sigma_0$ for a countable set $\Sigma_0 \subset \Sigma$, we can construct  
 \end{remark}
 
{We now want to approximate the sequence of functions} $F^n : \reals^n \to \black{[0,1]}$ defined in \eqref{eq:Fn}, {and thus the function $F$, by a sequence of neural networks}. We start with the formal definition of a neural network, see for example  \cite{hornik1991}.
\begin{definition}\label{def:neuralnetwork}
Let $I \subseteq \reals$. A neural network $\hat{F} \colon \reals^{n} \to I$ is a function
\[ \hat{F}(x)=(\psi \circ W_L \circ ({\sigma} \circ W_{L-1}) \circ \cdots \circ ({\sigma} \circ W_1))(x), \quad x \in \reals^{n}, \]
with given $L \in \bN$, $L \ge 2$, where the inner activation function $\sigma \colon \reals \to \reals$ is bounded, continuous and non-constant, where the outer activation function $\psi \colon \reals \to I$ is invertible and Lipschitz continuous and where $W_\ell \colon \reals^{N_{\ell-1}} \to \reals^{N_\ell}$ are affine functions with $N_1,\ldots,N_{L-1} \in \bN$, $N_0=n$, $N_L=1$.
\end{definition} 

We then have the following result.

\begin{theorem}\label{theo:existencenetwork}
 Fix $p\in [1,\infty)$  and let $(T_n, K_n)_{n \in \bN}$  be the sequence of maturities and strikes introduced in Proposition \ref{prop:existenceapproximation}. Also let $\mu$ be a probability measure on $\X$ such that there exists a probability measure $\nu$ on $\X$ which satisfies \eqref{eq:measurenotzero}. Then for any $\epsilon >0$ there exists an $n \in \bN$ and a neural network $\hat{F}^n \colon \reals^n \to [0,1]$ such that
$$
\left(\int_\X |\hat{F}^n(g(T_1,K_1),...,g(T_n,K_n)) - F(g)|^p d\mu(g) \right)^{1/p}< \epsilon.
$$
% Fix $n \in \bN$, $p>0$ and let $F$ be the function defined in \eqref{eq:Ftodectedbubble1} and \eqref{eq:Ftodectedbubble2}. Then for any $\epsilon>0$ there exists a neural network $\hat{F}^n \colon \reals^{n} \to \{0,1\}$ such that
% $\lVert F - \hat F^n \rVert_p \le \epsilon$, where $\lVert \cdot \rVert_p$ is the $L^p$-norm.
\end{theorem}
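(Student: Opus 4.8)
The plan is to combine Proposition \ref{prop:existenceapproximation} with the universal approximation theorem of \cite{hornik1991}. Fix $p \in [1,\infty)$ and $\epsilon > 0$. First I would invoke Proposition \ref{prop:existenceapproximation} to obtain the sequence $(F^n)_{n \in \bN}$, $F^n \colon \reals^n \to [0,1]$, satisfying \eqref{eq:limitproposition}. By the convergence in \eqref{eq:limitproposition}, there exists $n_0 \in \bN$ such that
\[
\left(\int_\X \left| F^{n_0}(g(T_1,K_1),\ldots,g(T_{n_0},K_{n_0})) - F(g) \right|^p d\mu(g)\right)^{1/p} < \frac{\epsilon}{2}.
\]
From now on I fix this $n = n_0$ and work with the single function $F^{n_0} \colon \reals^{n_0} \to [0,1]$.

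Next I would approximate $F^{n_0}$ by a neural network in the appropriate $L^p$-norm. The natural measure to use here is the pushforward $\mu_{n_0} := \mu \circ \pi_{n_0}^{-1}$ of $\mu$ under the evaluation map $\pi_{n_0} \colon \X \to \reals^{n_0}$, $\pi_{n_0}(g) := (g(T_1,K_1),\ldots,g(T_{n_0},K_{n_0}))$; note $\pi_{n_0}$ is measurable since evaluation at a point is continuous on $\X$ with the topology of \eqref{eq:seminormsC12}. Then $\mu_{n_0}$ is a finite Borel measure on $\reals^{n_0}$ and $F^{n_0} \in L^p(\reals^{n_0}, \mu_{n_0})$ because $F^{n_0}$ is bounded by $1$ and measurable (measurability of $F^{n_0}$ follows from \eqref{eq:Fn}, as $c^n \mapsto \nu(S_n(c^n))$ and $c^n \mapsto \int_{S_n(c^n)} F \, d\nu$ are measurable — this can be checked, or simply noted that $F^{n_0}$ as a function of finitely many real evaluations is Borel). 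The universal approximation theorem of \cite{hornik1991} guarantees that the class of neural networks as in Definition \ref{def:neuralnetwork} with a bounded continuous non-constant inner activation $\sigma$ is dense in $L^p$ with respect to any finite measure; since the outer activation $\psi$ is an invertible Lipschitz map onto $[0,1]$ (e.g. a suitably scaled sigmoid), one can arrange the network output to lie in $[0,1]$. Hence there is a neural network $\hat F^{n_0} \colon \reals^{n_0} \to [0,1]$ with
\[
\left(\int_{\reals^{n_0}} |\hat F^{n_0}(y) - F^{n_0}(y)|^p \, d\mu_{n_0}(y)\right)^{1/p} < \frac{\epsilon}{2},
\]
which by the change-of-variables formula equals $\left(\int_\X |\hat F^{n_0}(\pi_{n_0}(g)) - F^{n_0}(\pi_{n_0}(g))|^p \, d\mu(g)\right)^{1/p}$.

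Finally I would combine the two estimates via the triangle inequality in $L^p(\X,\mu)$:
\[
\left(\int_\X |\hat F^{n_0}(\pi_{n_0}(g)) - F(g)|^p d\mu(g)\right)^{1/p} \le \left(\int_\X |\hat F^{n_0}(\pi_{n_0}(g)) - F^{n_0}(\pi_{n_0}(g))|^p d\mu(g)\right)^{1/p} + \left(\int_\X |F^{n_0}(\pi_{n_0}(g)) - F(g)|^p d\mu(g)\right)^{1/p} < \frac{\epsilon}{2} + \frac{\epsilon}{2} = \epsilon,
\]
which is the claimed bound with $\hat F^n := \hat F^{n_0}$. The main subtlety — rather than a genuine obstacle — is the careful bookkeeping around the pushforward measure and the precise statement of the version of the universal approximation theorem being used: one needs the $L^p$-density statement (Hornik's Theorem 1 / Corollary 2 of \cite{hornik1991}) for a finite measure on $\reals^{n_0}$, together with the observation that composing with the Lipschitz invertible outer activation $\psi$ keeps us inside the network class of Definition \ref{def:neuralnetwork} while forcing the range into $[0,1]$; the heavy lifting of the limit $n \to \infty$ has already been done in Proposition \ref{prop:existenceapproximation}.
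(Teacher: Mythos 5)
Your proposal is correct and follows essentially the same route as the paper: invoke Proposition \ref{prop:existenceapproximation} to pick $n$ with error below $\epsilon/2$, then apply Hornik's universal approximation theorem (with respect to the pushforward of $\mu$ under the evaluation map) to approximate $F^n$ by a network within $\epsilon/2$, and conclude by the triangle inequality. The only presentational difference is that the paper makes explicit the device you relegate to a remark — it approximates $\psi^{-1}\circ F^n$ by a network with identity outer activation and then composes with the Lipschitz map $\psi$, paying a factor $K^p$, to force the output into $[0,1]$ while staying in the network class of Definition \ref{def:neuralnetwork}.
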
	 
\begin{proof} 
Fix $p\in [1,\infty)$ and $\epsilon>0$. By Proposition \ref{prop:existenceapproximation} there exists $n \in \bN$ such that
\begin{equation}\label{eq:firstepsilonhalf}
\int_\X |F^n(g(T_1,K_1),...,g(T_n,K_n)) - F(g)|^p d\mu(g) < \left(\frac{\epsilon}{2}\right)^p,
\end{equation}
with $F^n$ defined in \eqref{eq:Fn}. Let $\psi \colon \reals \to [0,1], $ invertible and Lipschitz continuous with Lipschitz constant $K>0$. By  the universal approximation theorem, see \cite{hornik1991}, there exists a neural network $\phi^n \colon \reals^{n} \to \reals$  with identity as outer activation function and such that
$$
\int_\X |\phi^n(g(T_1,K_1),...,g(T_n,K_n)) - (\psi^{-1} \circ \black{F^n}) (g(T_1,K_1),...,g(T_n,K_n)) |^p d\mu(g) < \left(\frac{\epsilon}{2K}\right)^p.
$$
Set now $\hat F^n := \psi\circ\phi^n$. Then we have
\begin{align}
&\int_\X |\hat F^n(g(T_1,K_1),...,g(T_n,K_n)) -\black{F^n} (g(T_1,K_1),...,g(T_n,K_n)) |^p d\mu(g) \notag \\ 
& \le K^p \int_\X |\phi^n(g(T_1,K_1),...,g(T_n,K_n)) - (\psi^{-1} \circ \black{F^n}) (g(T_1,K_1),...,g(T_n,K_n)) |^p d\mu(g) < \left(\frac{\epsilon}{2}\right)^p.\label{eq:secondepsilonhalf}
\end{align}
Therefore it holds
\begin{align}
&\left(\int_\X |\hat{F}^n(g(T_1,K_1),...,g(T_n,K_n)) - F(g)|^p d\mu(g)\right)^{1/p} \notag \\
& \le \left(\int_\X |\hat F^n(g(T_1,K_1),...,g(T_n,K_n)) -\black{F^n} (g(T_1,K_1),...,g(T_n,K_n)) |^p d\mu(g)\right)^{1/p}   \notag \\
& \quad + \left(\int_\X |F^n(g(T_1,K_1),...,g(T_n,K_n)) - F(g)|^p d\mu(g)\right)^{1/p}   \notag \\
& < \epsilon \notag
\end{align}
by \eqref{eq:firstepsilonhalf} and \eqref{eq:secondepsilonhalf}.
\end{proof}

\begin{remark}
When $\alpha=1$, we have
\begin{equation}\notag
\lim_{K \to 0}C^{X}(T,K) = \lim_{K \to 0} \bE^Q[(X_T-K)^+] + m(T) =   X_0,
\end{equation}
so the computations illustrated in the proof of Proposition \ref{prop:existenceF} do not help detecting the strict local martingale property of $X$. In this case, we prove a neural network approximation focusing on local volatility models.
\end{remark}

\begin{remark}
The function $F^n$ is defined on $\mathbb{R}^n$ with $n$ possibly very large. In addition to their universal approximation properties (see \cite{hornik1991}) neural networks have been shown to be particularly suitable for learning high-dimensional functions due to their ability to overcome the curse of dimensionality in many situations, see for example the seminal work \cite{Barron1993} and \cite{EGJS18_787}, \cite{HornungJentzen2018}, \cite{ReisingerZhang2019}, \cite{GS20_925}\blue{,} \cite{Gonon2021}  in the context of option pricing. These properties are a key motivation for our choice of neural networks to approximate $F^n$.
\end{remark}

 \subsection{The general case $\alpha \in [0,1]$ in the setting of local volatility models}\label{sec:localvolatilitymodels}

We now include the case $\alpha=1$ in equation \eqref{eq:pricecallcollateralized}. In this setting, we focus on the class of local volatility models, i.e., stochastic processes $X=(X_t)_{t \ge 0}$ such that there exists a risk neutral measure $Q \sim P$ under which $X$ has dynamics
\begin{equation}\label{eq:localvolatility}
dX_t = \sigma (t, X_t)dW_t, \quad t \ge 0, \quad X_0 = x >0,
\end{equation}
where $W=(W_t)_{t \ge 0}$ is a one-dimensional $(\bF,Q)$-Brownian motion and the function $\sigma$ satisfies the following conditions. {In the sequel, \emph{martingale} and \emph{(strict) local martingale} are always meant with respect to the probability measure $Q$. }
\begin{assumption}\label{ass:sigma}
The function $\sigma: [0,\infty) \times (0, \infty) \to (0,\infty)$ is continuous. Moreover, it is locally H\"older continuous with exponent $1/2$ with respect to the second variable, i.e., for any $L>0$ and any compact set $ [0,L] \times [L^{-1}, L]$ there exists a constant $C \in [0, \infty)$ such that 
$$
|\sigma(t,x)-\sigma(t,y)| \le C|x-y|^{1/2} 
$$
for all $(t,x,y) \in [0,L] \times [L^{-1},L]^2$. Furthermore, there exists a constant $A  \in [0, \infty)$ such that $\sigma(t,x) \le A$ for any $(t,x) \in [0,\infty) \times (0,1]$.  
\end{assumption}
Assumption \ref{ass:sigma} provides existence and uniqueness of a strong solution of \eqref{eq:localvolatility}, see Chapter IX.3 in \cite{revuz2013continuous}. %, which in particular admits a strictly positive and smooth density, see the proof of Theorem 2.2 in \cite{ekstrom2012dupire}.
In particular, we do not require the volatility function to be at most of linear growth in the spatial variable. As stated in the next result, which is Theorem 8 of \cite{jarrow2011detect}, this allows $X$ to be a strict local martingale. 
\begin{theorem}\label{thm:sigmatomartingales}
Let $\sigma: [0,\infty) \times (0, \infty) \to (0,\infty)$ be a function satisfying Assumption \ref{ass:sigma}. Assume that there exist two functions $\underline{\sigma}, \overline{\sigma} : (0,\infty) \to (0, \infty)$, which are continuous and locally  H\"older continuous with exponent $1/2$, such that $\underline{\sigma}(x) \le \sigma(t,x) \le \overline{\sigma}(x)$ for any $(t,x) \in [0,\infty) \times (0, \infty)$. Let $X=(X_t)_{t \ge 0}$ be the process in \eqref{eq:localvolatility}. The following holds:
\begin{enumerate}
\item If $\int_{c}^{\infty} \frac{y}{\overline{\sigma}^2(y)}dy = \infty$ for every $c>0$, then $X$ is a martingale.
\item If there exists $c>0$ such that  $\int_{c}^{\infty} \frac{y}{\underline{\sigma}^2(y)}dy < \infty$, then $X$ is a strict local martingale. 
\end{enumerate}
\end{theorem}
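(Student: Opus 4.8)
The statement bundles a sufficient condition for the true--martingale property (item 1) and a sufficient condition for loss of mass (item 2). My plan is to sandwich $X$ between the time--homogeneous diffusions driven by $\overline{\sigma}$ and $\underline{\sigma}$: a Lyapunov function handles item 1, and a PDE comparison handles item 2.

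\emph{Item 1.} A nonnegative continuous local martingale with localizing sequence $\tau_n:=\inf\{t\ge0:X_t\ge n\}$ is a true martingale on every finite horizon as soon as $\{X_{t\wedge\tau_n}\}_{n\ge1}$ is uniformly integrable for each $t$, and by the de la Vall\'ee--Poussin criterion this follows once we exhibit one nonnegative increasing $V$ with $V(y)/y\to\infty$ and $\sup_nE[V(X_{t\wedge\tau_n})]<\infty$. I would take $V$ to be the solution on $(0,\infty)$ of the linear ODE $\tfrac12\,\overline{\sigma}^2(y)V''(y)=V(y)$ with $V(1)=V'(1)=1$, replacing it near $0$ by a bounded $C^2$ function (legitimate since $\sigma\le A$ on $(0,1]$, so the resulting second--order term stays bounded there). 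On $[1,\infty)$ this $V$ is positive, increasing and convex, hence lies above its tangent at $1$, so $V(y)\ge y$; feeding this back into the ODE gives $V''(y)\ge 2y/\overline{\sigma}^2(y)$, so $V'(y)\ge 1+2\int_1^y u\,\overline{\sigma}^{-2}(u)\,du\to\infty$ by hypothesis, which forces $V(y)/y\to\infty$. The point of the ODE is the pointwise bound $\tfrac12\,\sigma^2(t,y)V''(y)=\sigma^2(t,y)V(y)/\overline{\sigma}^2(y)\le V(y)$, valid because $0<\sigma(t,\cdot)\le\overline{\sigma}$. It\^o's formula applied to $V(X_{t\wedge\tau_n})$ then gives $E[V(X_{t\wedge\tau_n})]\le V(x)+Ct+\tfrac12\int_0^tE[V(X_{s\wedge\tau_n})]\,ds$ for a constant $C$, and Gronwall's lemma produces a bound uniform in $n$.

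\emph{Item 2.} Here I would show $E[X_T]<x$ for some $T>0$, which excludes the true--martingale property. Let $\underline{X}$ solve the time--homogeneous SDE $d\underline{X}_t=\underline{\sigma}(\underline{X}_t)\,dW_t$, $\underline{X}_0=x$ (well posed since $\underline{\sigma}$ is continuous and locally $\tfrac12$--H\"older); since $\int_c^\infty y\,\underline{\sigma}^{-2}(y)\,dy<\infty$, the classical time--homogeneous criterion (Delbaen--Shirakawa) makes $\underline{X}$ a strict local martingale, so $u(T,x)<x$ for $T$ large, where $u(s,y):=E_y[\underline{X}_s]$. The function $u$ is nonnegative, of class $C^{1,2}$ on $(0,\infty)^2$ by parabolic regularity, solves $u_s=\tfrac12\underline{\sigma}^2u_{yy}$ with $u(0,\cdot)=\mathrm{id}$, and is concave in $y$ (propagation of concavity for a natural--scale diffusion). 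Setting $\bar u(t,y):=u(T-t,y)$, concavity together with $\sigma\ge\underline{\sigma}$ yields
\[
\bar u_t(t,y)+\tfrac12\sigma^2(t,y)\bar u_{yy}(t,y)=\tfrac12\bigl(\sigma^2(t,y)-\underline{\sigma}^2(y)\bigr)\bar u_{yy}(t,y)\le0,
\]
so $\bar u$ is a supersolution for the generator of $X$. Applying It\^o's formula to $\bar u(t,X_t)$ on $[0,T]$ and localizing at $\tau_n$ and at the first time $X$ drops below $1/m$ so that the stochastic integral becomes a true martingale, the nonpositive drift gives $E[\bar u(T\wedge\theta_{n,m},X_{T\wedge\theta_{n,m}})]\le\bar u(0,x)=u(T,x)$; letting $n,m\to\infty$ and using Fatou's lemma with $\bar u(T,\cdot)=\mathrm{id}$ gives $E[X_T]\le u(T,x)<x$.

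\emph{Where the work is.} In item 1 the crux is the Lyapunov function: one must verify that the solution of the $\overline{\sigma}$--ODE is superlinear precisely under the divergence hypothesis (the one--line bootstrap above), and control its behaviour at the boundary $0$ using the boundedness clause of Assumption \ref{ass:sigma}. In item 2 the substance is the time--homogeneous input — that $\underline{X}$ loses mass, and that $y\mapsto E_y[\underline{X}_s]$ is $C^{1,2}$ and concave — after which transferring the estimate to $X$ is the routine It\^o/localization argument underlying the monotonicity of call prices in the local volatility.
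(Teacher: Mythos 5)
First, a point of reference: the paper does not prove this statement at all --- it is quoted verbatim as Theorem 8 of \cite{jarrow2011detect}, and the two ingredients of the proof given there (a comparison lemma asserting that a pointwise larger volatility yields a smaller value of $\bE[X_T]$, and the time-homogeneous Feller/Delbaen--Shirakawa integral test) are precisely the ones the paper later imports as ``Lemma 9'' and ``Theorem 7'' of that reference in its Lemmas \ref{lem:sigmatolocalmartingale} and \ref{lemma:sigmatomartingalesgen}. So your proof is necessarily a different route. Your item 1 is complete and correct: the Lyapunov function solving $\tfrac12\overline{\sigma}^2V''=V$, the bootstrap $V\ge y\Rightarrow V'\to\infty$ under the divergence hypothesis, the harmless modification near $0$ using the boundedness clause of Assumption \ref{ass:sigma}, and the It\^o/Gronwall/de la Vall\'ee--Poussin chain all go through; this replaces the comparison with the $\overline{\sigma}$-diffusion by a direct uniform-integrability estimate and is more self-contained than the cited argument.

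Item 2 is where there is a genuine gap. The entire PDE comparison hinges on the claim that $y\mapsto \bE_y[\underline{X}_s]$ is concave, which you dispatch as ``propagation of concavity for a natural-scale diffusion''. That cannot be waved through in exactly this setting: the mirror statement --- that a convex payoff of linear growth has a convex price --- is \emph{false} when the underlying is a strict local martingale (for the inverse Bessel example the call price is increasing and bounded in the spot, hence cannot be convex), and the failure mechanism, mass escaping to infinity, is the very phenomenon the theorem is about. The concavity you need is true, but it requires an argument: for instance, stop $\underline{X}$ at the exit time $\tau_n$ of $(1/n,n)$ so that the stopped process is a bounded true martingale, apply Hobson's coupling (equivalently, classical convexity preservation for martingale diffusions) to the bounded concave payoffs $z\mapsto z\wedge m$ to get concavity of $y\mapsto\bE_y[\underline{X}_{s\wedge\tau_n}\wedge m]$, and then let $n,m\to\infty$, using that pointwise monotone limits of concave functions are concave. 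Likewise, the assertion that $u(s,y)=\bE_y[\underline{X}_s]$ is a classical $C^{1,2}$ solution of $u_s=\tfrac12\underline{\sigma}^2u_{yy}$ is not generic ``parabolic regularity'': uniqueness for this Cauchy problem fails in the strict local martingale regime ($u\equiv y$ is another solution), and the fact that the stochastic representation selects a classical solution under only local H\"older-$1/2$ regularity of $\underline{\sigma}$ is a theorem of Ekstr\"om and Tysk, which should be cited rather than assumed. With those two inputs supplied, your localization/Fatou step is fine and $\bE[X_T]\le u(T,x)<x$ follows.
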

We now give a generalization of Theorem \ref{thm:sigmatomartingales} that is useful to provide the theoretical foundation of our methodology in this setting. 

 \subsubsection{Strict local martingale property of local volatility models}\label{sec:strictlocalvol}
Before stating the main result, we give two lemmas.
\begin{lemma}\label{lem:sigmatolocalmartingale}
Fix $c>0$. Let $\sigma: [0,\infty) \times (0, \infty) \to (0,\infty)$ be a function satisfying Assumption \ref{ass:sigma} and $X=(X_t)_{t \ge 0}$ be the process in \eqref{eq:localvolatility}. Then $X$ is a strict local martingale if there exist $s,T$ with $0\le s<T$ and a function $\underline{\sigma}: (0,\infty) \to (0,\infty)$, continuous and locally H\"older continuous with exponent $1/2$, such that:
\begin{align}
&\int_c^{\infty}  \frac{y}{\underline{\sigma}^2(y)}dy < \infty, \label{assum:integralfinite} \\
& \underline{\sigma}(x) \le \sigma(t,x) \text{ for any $(t,x) \in [s,T] \times (0,\infty)$}. \label{assum:smallerthansigma}
\end{align}
\end{lemma}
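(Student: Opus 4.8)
The plan is to show that $\bE^Q[X_T] < x$; this already yields the assertion, since $X$ solves a driftless SDE with continuous coefficient and is therefore a nonnegative continuous local martingale, hence a supermartingale, and a true martingale would force $\bE^Q[X_T] = \bE^Q[X_0] = x$. By Assumption~\ref{ass:sigma}, \eqref{eq:localvolatility} has a unique strong solution (\cite{revuz2013continuous}), so its solution is strong Markov and enjoys the flow property; conditioning on $\F_s$ gives $\bE^Q[X_T] = \bE^Q[\varphi(X_s)]$, where $\varphi(y) := \bE^Q[X_T \mid X_s = y]$ denotes the time-$T$ mean of the solution of~\eqref{eq:localvolatility} started from $y$ at time $s$ (in particular $\varphi(y) \le y$, again by the supermartingale property). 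It therefore suffices to prove that $\varphi(y) < y$ for every $y > 0$: since $X$ is positive and $\bE^Q[X_s] \le x < \infty$, this gives $0 \le \varphi(X_s) < X_s$ $Q$-a.s., hence $\bE^Q[X_T] = \bE^Q[\varphi(X_s)] < \bE^Q[X_s] \le x$.

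To bound $\varphi$, I would compare the solution restarted at $(s,y)$ with the time-homogeneous diffusion $dY_u = \underline\sigma(Y_u)\,dW_u$, $Y_0 = y$. By~\eqref{assum:integralfinite} and Theorem~\ref{thm:sigmatomartingales}(2), applied with $\overline\sigma = \underline\sigma$, the process $Y$ is a strict local martingale; moreover $\bE^Q[Y_u \mid Y_0 = y] < y$ for every $u > 0$ and every $y > 0$ — a property of time-homogeneous strict-local-martingale diffusions that I would extract from the tail estimates underlying Theorem~\ref{thm:sigmatomartingales} (a nondegenerate driftless diffusion which is not a true martingale loses mass over every nondegenerate time interval and from every interior starting point). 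Hence it is enough to establish the comparison $\varphi(y) \le \bE^Q[Y_{T-s} \mid Y_0 = y]$, which should follow from~\eqref{assum:smallerthansigma} alone and expresses the principle ``larger volatility $\Rightarrow$ smaller mean of the terminal value'' for driftless diffusions.

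For this comparison, set $w_n(v,x) := \bE^Q[(Y_{T-s} \wedge n) \mid Y_v = x]$ on $[0,T-s] \times (0,\infty)$; being built from the bounded terminal datum $x \mapsto x \wedge n$, the function $w_n$ is bounded, and by Assumption~\ref{ass:sigma} (positivity and local H\"older-$1/2$ regularity of $\underline\sigma$, boundedness near $0$) together with interior parabolic regularity it lies in $C^{1,2}$ on $(0,T-s)\times(0,\infty)$, where it solves $\partial_v w_n + \tfrac12\underline\sigma(x)^2\,\partial_{xx}w_n = 0$ — the truncation at level $n$ being precisely what keeps $w_n$ bounded and classical (it removes the non-uniqueness tied to the bubble of $Y$). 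Moreover $x \mapsto w_n(v,x)$ is concave for each $v$, since concavity of $(\,\cdot\,)\wedge n$ propagates backwards under this driftless equation. Applying It\^o's formula to $v \mapsto w_n(v, X_{s+v})$ on $[0,T-s]$ and using $\partial_v w_n = -\tfrac12\underline\sigma^2\partial_{xx}w_n$ yields
\[
dw_n(v,X_{s+v}) = \tfrac12\big(\sigma(s+v,X_{s+v})^2 - \underline\sigma(X_{s+v})^2\big)\,\partial_{xx}w_n(v,X_{s+v})\,dv + \partial_x w_n(v,X_{s+v})\,dX_{s+v};
\]
the second term is a local martingale, and the $dv$-coefficient is $\le 0$ because $\sigma(s+v,\cdot) \ge \underline\sigma$ for $v \in [0,T-s]$ by~\eqref{assum:smallerthansigma} while $\partial_{xx}w_n \le 0$ by concavity, so $w_n(\cdot, X_{s+\cdot})$ is a bounded local supermartingale, hence a true supermartingale, on $[0,T-s]$. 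Conditioning on $X_s = y$ and evaluating at $v = T-s$ gives $\bE^Q[X_T \wedge n \mid X_s = y] \le w_n(0,y) = \bE^Q[Y_{T-s} \wedge n \mid Y_0 = y]$, and letting $n \to \infty$ (monotone convergence on both sides) yields $\varphi(y) \le \bE^Q[Y_{T-s} \mid Y_0 = y] < y$, as required. I expect the main obstacle to be the PDE input in this last step for a coefficient $\underline\sigma$ that is merely locally H\"older-$1/2$ and may grow super-quadratically at infinity — the $C^{1,2}$-regularity of $w_n$, the backward propagation of concavity, and the behaviour at the (possibly accessible) boundary $x = 0$ — which I would handle by approximating $(\,\cdot\,)\wedge n$ by smooth concave payoffs and, if necessary, $\underline\sigma$ by nicer coefficients, then passing to the limit, relying on parabolic-regularity and convexity-preservation results in the spirit of~\cite{ekstrom2012dupire}. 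Alternatively, the comparison can be obtained by a Dubins--Schwarz time change: realising the restarted solution and $Y$ as one Brownian motion run along clocks ordered via~\eqref{assum:smallerthansigma}, one checks that the terminal value of the former is a mean-preserving spread of that of the latter, which gives the same inequality for the concave payoffs $(\,\cdot\,)\wedge n$.
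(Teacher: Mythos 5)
Your proof is correct and follows essentially the same route as the paper's: condition at time $s$, compare $X$ on $[s,T]$ with the autonomous diffusion driven by $\underline{\sigma}$, note that the latter is a strict local martingale by the integral test so its mean drops below the starting point, and transfer this to $X$ via the principle that a larger diffusion coefficient yields a smaller terminal mean. The only difference is that the paper outsources the two key ingredients to Lemma 9 and Theorem 7 of \cite{jarrow2011detect}, whereas you sketch their (standard, correct) proofs yourself via preservation of concavity and the tail estimate behind Theorem \ref{thm:sigmatomartingales}.
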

\begin{proof}
 Let $0 \le s<T$ and $\underline{\sigma}$ as above. For $x \in (0, \infty)$, define the process $X^{\underline{\sigma},T, s,x}$ to be the unique strong solution of 
\begin{align}
dX^{\underline{\sigma},T, s,x}_t &= \underline\sigma (X^{\underline{\sigma},T, s,x}_t)dW_t, \quad  s \le t \le T,\notag \\
 X^{\underline{\sigma},T, s,x}_s &= x.\notag
\end{align}
Lemma  9 of \cite{jarrow2011detect}  together with the assumption in  \eqref{assum:smallerthansigma} implies that
$$
{\bE^Q}[X_T|X_s=x] \le {\bE^Q}[X^{\underline{\sigma},T, s,x}_T] < x,
$$
where the last inequality comes from the fact that the process $X^{\underline{\sigma},T, s,x}$ is a strict local martingale for any $(s,x) \in [0,\infty) \times (0,\infty)$ by Theorem 7 of \cite{jarrow2011detect} and \eqref{assum:integralfinite}. 
\end{proof}

\begin{lemma}\label{lemma:sigmatomartingalesgen}
Fix $c>0$ and $(r,x) \in [0,\infty) \times (0,\infty).$ Let $\sigma: [0,\infty) \times (0, \infty) \to (0,\infty)$ be a function satisfying Assumption \ref{ass:sigma}. Denote by $X^{r,x}=(X^{r,x}_t)_{t \ge r}$  the unique strong solution of the SDE
\begin{align}
dX^{r,x}_t &= \sigma (t, X^{r,x}_t)dW_t, \quad  t \ge r, \notag \\
 X^{r,x}_r &= x.\notag
\end{align}
Then $X^{r,x}$ is a true martingale if for all $T > r$ there exists a function $\overline{\sigma}^{(T)}: (0,\infty) \to (0,\infty)$, continuous and locally H\"older continuous with exponent $1/2$, such that:
\begin{align}
&\int_c^{\infty}  \frac{x}{(\overline{\sigma}^{(T)}(x))^2}dx = \infty, \label{assum:integralinfinity} \\
& \overline{\sigma}^{(T)}(x) \ge \sigma(t,x) \text{ for any $(t,x) \in [r,T] \times (0,\infty)$}. \label{assum:biggerthansigma}
\end{align}
\end{lemma}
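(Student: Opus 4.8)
The plan is to show that $X^{r,x}$ is a true martingale on $[r,\infty)$ by verifying $\bE[X^{r,x}_T]=x$ for every $T>r$, mirroring the proof of Lemma \ref{lem:sigmatolocalmartingale} but using the comparison result Lemma~9 of \cite{jarrow2011detect} in the opposite direction. First I would record that $X^{r,x}$, being a continuous stochastic integral against $W$ started at $x>0$, is a local martingale, and that it is nonnegative (since $\sigma$ is defined only on $(0,\infty)$, the solution is absorbed at $0$); hence it is a supermartingale on $[r,\infty)$ and $\bE[X^{r,x}_T]\le x$ for all $T\ge r$. For the converse, I would note that if $\bE[X^{r,x}_T]=x$ for every $T>r$, then for $r\le s\le t$ the nonnegative variable $X^{r,x}_s-\bE[X^{r,x}_t\mid\F_s]$ has zero expectation (because $\bE[X^{r,x}_s]=\bE[X^{r,x}_t]=x$), hence vanishes $Q$-a.s.; combined with $\bE[X^{r,x}_t]=x<\infty$, this yields the martingale property. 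So everything reduces to proving $\bE[X^{r,x}_T]=x$ for a fixed but arbitrary $T>r$.

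To this end, fix such a $T$, take $\overline{\sigma}^{(T)}$ as in the statement, and introduce the time-homogeneous diffusion $Y=(Y_t)_{t\in[r,T]}$ solving $dY_t=\overline{\sigma}^{(T)}(Y_t)\,dW_t$, $Y_r=x$ (with $0$ absorbing); this has a unique strong solution by the Yamada--Watanabe argument, since $\overline{\sigma}^{(T)}$ is continuous and locally H\"older continuous with exponent $1/2$, cf.\ Chapter IX.3 of \cite{revuz2013continuous}. From \eqref{assum:biggerthansigma} we have $\sigma(t,y)\le\overline{\sigma}^{(T)}(y)$ on $[r,T]\times(0,\infty)$, so Lemma~9 of \cite{jarrow2011detect} --- applied as in the proof of Lemma \ref{lem:sigmatolocalmartingale}, with the two diffusion coefficients interchanged --- gives $\bE[X^{r,x}_T]\ge\bE[Y_T]$. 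Next, \eqref{assum:integralinfinity} holds for the fixed $c$, and since $y\mapsto y/(\overline{\sigma}^{(T)}(y))^2$ is continuous on $(0,\infty)$, the integral in fact diverges for every positive lower limit; hence $Y$ is a true martingale by Theorem~8 of \cite{jarrow2011detect} (the time-homogeneous criterion, which needs only continuity and local H\"older-$1/2$ regularity of the coefficient, exactly as Theorem~7 of \cite{jarrow2011detect} was applied to $\underline{\sigma}$ in Lemma \ref{lem:sigmatolocalmartingale}), so $\bE[Y_T]=Y_r=x$. Combining the two estimates gives $x\le\bE[X^{r,x}_T]\le x$, and the reduction above finishes the proof.

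I do not expect a deep obstacle here: the argument is an assembly of the two cited results from \cite{jarrow2011detect} together with the elementary fact that a nonnegative supermartingale with time-constant expectation is a martingale. The two points that will need care are getting the direction of the comparison inequality right (larger local volatility on $[r,T]$ produces a \emph{smaller} terminal expectation) and noting that, although $\overline{\sigma}^{(T)}$ need not be bounded near the origin as in Assumption \ref{ass:sigma}, this is immaterial, since the martingale property of the time-homogeneous diffusion $Y$ is dictated purely by the growth of $\overline{\sigma}^{(T)}$ at $+\infty$ through \eqref{assum:integralinfinity}.
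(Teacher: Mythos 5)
Your proof is correct and follows essentially the same route as the paper: both arguments rest on the comparison result (Lemma~9 of \cite{jarrow2011detect}) applied against the dominating time-homogeneous diffusion with coefficient $\overline{\sigma}^{(T)}$, whose true-martingale property comes from the time-homogeneous integral criterion, and both then upgrade the resulting inequality to an equality via the nonnegative-supermartingale property of $X^{r,x}$. The only differences are cosmetic --- you deduce the martingale property from the constancy of $t\mapsto\bE[X^{r,x}_t]$, whereas the paper runs the comparison from every intermediate time $s$ and state $y$ to obtain $\bE[X^{r,x}_T\mid X^{r,x}_s=y]=y$ directly --- plus one citation slip: the time-homogeneous criterion you invoke is Theorem~7 of \cite{jarrow2011detect}, not Theorem~8, as your own parenthetical already indicates.
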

\begin{proof}
Consider $s,T$ such that $T>s \ge r$, and $y \in (0, \infty)$. Define the process $X^{\overline{\sigma},T, s,y}$ to be the unique strong solution of 
\begin{align}\notag
dX^{\overline{\sigma},T, s,y}_t &= \overline\sigma^{(T)} (X^{\overline{\sigma},T, s,y}_t)dW_t, \quad  s \le t \le T,\notag \\
 X^{\overline{\sigma},T, s,y}_s &= y.\notag
\end{align}
Lemma  9 of \cite{jarrow2011detect}  together with the assumption in  \eqref{assum:biggerthansigma} implies that
$$
{\bE^Q}[X^{r,x}_T|X^{r,x}_s=y] \ge {\bE^Q}[X^{\overline{\sigma},T, s,y}_T] = y,
$$
where the last equality holds because the process $X^{\overline{\sigma},T, s,x}$ is a martingale by Theorem 7 of \cite{jarrow2011detect} and \eqref{assum:integralinfinity}. Since $X^{r,x}$ is a local martingale and then a supermartingale, we obtain ${\bE^Q}[X^{r,x}_T|X^{r,x}_s=y] = y$. Therefore, $X^{r,x}$ is a martingale because $s, T$ and $y$ have been chosen arbitrarily. 
% now move to prove necessity. In particular, we suppose that there is $T>0$ such that there exists no continuous and H\"older continuous function with exponent $1/2$, such that both \eqref{assum:integralinfinity} and \eqref{assum:biggerthansigma} hold. We prove that $X$ is a strict local martingale under this assumption. Fix $s \in (0,T)$. Since  
\end{proof}

In Theorem \ref{thm:Xmartingaleifconditionforanyt} we now provide a sufficient and necessary condition for $X$ to be a true martingale in terms of the local volatility function $\sigma$. To this purpose we need requirements on $\sigma$ which are more restrictive than in Assumption \ref{ass:sigma}. However they are satisfied in the most commonly employed models in applications. 

\begin{assumption}\label{ass:monotone}
The function $\sigma: [0,\infty) \times (0, \infty) \to (0,\infty)$ satisfies Assumption \ref{ass:sigma} and there exist time points $(T_n)_{n \ge 0}$, with $T_0=0$ and $T_{n+1}>T_n$ for any $n\ge 0$, such that $\sigma(\cdot,x)$ restricted to any interval $[T_n,T_{n+1}]$, $n \ge 0$, is monotone increasing for all $x \in (0,\infty)$ or monotone decreasing for all $x \in (0,\infty)$.
\end{assumption}

Note that Assumption \ref{ass:monotone} includes the case when $\sigma$ is independent of time or is monotone increasing for all $x \in (0,\infty)$ or monotone decreasing for all $x \in (0,\infty)$ with respect to the time variable.

\begin{theorem}\label{thm:Xmartingaleifconditionforanyt}
Fix $c>0$, and let $\sigma: [0,\infty) \times (0, \infty) \to (0,\infty)$ satisfy Assumption \ref{ass:monotone}. Let $X=(X_t)_{t \ge 0}$ be the process in \eqref{eq:localvolatility}. Then $X$ is a true martingale if and only if 
\begin{align}
&\int_c^{\infty}  \frac{x}{\sigma^2(T,x)}dx = \infty \text{ for any $T \in (0,\infty) \setminus B^{\sigma}$,} \label{assum:integralinfinityforfixedT}
\end{align}
\end{theorem}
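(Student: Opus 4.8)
The statement is an equivalence, so the plan is to prove the two implications separately, in both cases combining the two comparison lemmas just established with the interval decomposition of Assumption \ref{ass:monotone}: $[0,\infty)=\bigcup_n[T_n,T_{n+1}]$, with $\sigma(\cdot,x)$ monotone on each $[T_n,T_{n+1}]$ in a direction that does not depend on $x$. The set $B^\sigma$ consists of the (at most countably many) exceptional times — the break-points $T_n$, or more precisely those among them at which this monotone structure has a local maximum in time — where the fixed-time integral cannot be controlled by the comparison arguments.

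For necessity I would argue by contraposition. Suppose there is $T_*\in(0,\infty)\setminus B^\sigma$ with $\int_c^\infty x\,\sigma(T_*,x)^{-2}\,dx<\infty$. Since $T_*$ is not a local-maximum-in-time point, there is an interval $[a,b]$ with $a<b$ and $T_*\in[a,b]$ on which $\sigma(\cdot,x)$ is monotone and $\sigma(T_*,x)\le\sigma(t,x)$ for all $(t,x)\in[a,b]\times(0,\infty)$ (take $[a,b]=[T_*,T_{n+1}]$ if $\sigma$ is increasing on the piece containing $T_*$, $[a,b]=[T_n,T_*]$ if decreasing, and the analogous two-sided interval if $T_*$ is a local-minimum or a non-switching break-point). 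Then $\underline\sigma:=\sigma(T_*,\cdot)$ is continuous and locally H\"older-$1/2$ by Assumption \ref{ass:sigma}, has $\int_c^\infty x\,\underline\sigma(x)^{-2}\,dx<\infty$, and satisfies $\underline\sigma\le\sigma$ on $[a,b]\times(0,\infty)$ with $a<b$; Lemma \ref{lem:sigmatolocalmartingale} (applied with $s=a$, $T=b$) then shows that $X$ is a strict local martingale. This direction is routine once the lemmas are available.

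For sufficiency, assume $\int_c^\infty x\,\sigma(T,x)^{-2}\,dx=\infty$ for all $T\in(0,\infty)\setminus B^\sigma$. As $X$ is a nonnegative local, hence super-, martingale, it is a true martingale as soon as $\bE[X_T]=X_0$ for every $T>0$ (the nonnegative random variable $X_s-\bE[X_T\mid\F_s]$ then has zero mean, hence vanishes a.s.). I would establish $\bE[X_t]=X_0$ for $t$ in each $[T_n,T_{n+1})$ by induction on $n$. On the open piece, for $[a,b]\subset(T_n,T_{n+1})$ one has, say in the increasing case, $\sigma(u,x)\le\sigma(b,x)$ for $u\in[a,b]$, and $b\notin B^\sigma$, so $\int_c^\infty x\,\sigma(b,x)^{-2}\,dx=\infty$; thus the process with time-independent volatility $\sigma(b,\cdot)$ is a true martingale by Theorem 7 of \cite{jarrow2011detect}, and the comparison Lemma 9 of \cite{jarrow2011detect} (exactly as in the proof of Lemma \ref{lemma:sigmatomartingalesgen}) gives $\bE[X_b\mid\F_a]\ge X_a$, hence $\bE[X_b\mid\F_a]=X_a$ by the supermartingale inequality. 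So $X$ is a martingale on every compact subinterval of $(T_n,T_{n+1})$, and a backwards-martingale convergence argument at $T_n$ extends this to $[T_n,T_{n+1})$. It then remains to glue across the exceptional time $T_{n+1}$, i.e.\ to pass from $\bE[X_t]=X_0$ for $t<T_{n+1}$ to $\bE[X_{T_{n+1}}]=X_0$ even though $T_{n+1}\in B^\sigma$; I would obtain this from the (standard) continuity of $t\mapsto\bE[X_t]$ for continuous nonnegative local martingales together with Fatou's lemma, and then the backwards-martingale argument applied from the right of $T_{n+1}$ carries the martingale property into the next interval, closing the induction.

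The main obstacle is precisely this last step: controlling $X$ across the exceptional times of $B^\sigma$, where the monotone-in-time structure degenerates and $\int_c^\infty x\,\sigma(T,x)^{-2}\,dx$ may be finite while $X$ is still a true martingale (the loss of mass of a strict local martingale accumulates over a positive time-interval, not at a single instant). Everything else is a careful but essentially mechanical combination of Lemmas \ref{lem:sigmatolocalmartingale}--\ref{lemma:sigmatomartingalesgen} and the comparison results of \cite{jarrow2011detect} with the decomposition of Assumption \ref{ass:monotone}; I would still need to treat the degenerate sub-cases ($\sigma$ independent of time on a piece, non-switching break-points, and — unless ruled out by the standing assumptions — break-points accumulating at a finite time) and to check that the $B^\sigma$ in the statement is exactly the set for which both implications go through.
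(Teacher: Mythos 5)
Your overall strategy coincides with the paper's: necessity by contraposition via Lemma \ref{lem:sigmatolocalmartingale} (producing an interval with nonempty interior on which $\underline\sigma:=\sigma(T_*,\cdot)$ is a lower bound for $\sigma$), and sufficiency by establishing the martingale property piecewise on the monotonicity intervals of Assumption \ref{ass:monotone} via the comparison Lemma \ref{lemma:sigmatomartingalesgen}, and then gluing at the exceptional times. The necessity direction and the treatment of the open pieces are correct and essentially identical to the paper's argument.

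The genuine gap is exactly at the step you yourself flag as the main obstacle. To pass from $\bE[X_t]=X_0$ for $t<T^*$ to $\bE[X_{T^*}]=X_0$ at a time $T^*\in B^{\sigma}$ you need \emph{left}-continuity of $t\mapsto\bE[X_t]$, and this is neither ``standard'' for continuous nonnegative local martingales nor obtainable from Fatou's lemma. Fatou combined with path continuity gives $\bE[X_{T^*}]\le\liminf_{t\to T^*}\bE[X_t]$, which together with the nonincreasing mean yields \emph{right}-continuity but points in the useless direction for left-continuity. Left-continuity in fact fails for general continuous nonnegative local martingales: with $Y_u=\exp(B_u-u/2)$ (a non-uniformly-integrable true martingale with $Y_u\to 0$ a.s.) and an increasing bijection $c\colon[0,T^*)\to[0,\infty)$, the time-changed process $Z_t:=Y_{c(t)}$ for $t<T^*$ and $Z_t:=0$ for $t\ge T^*$ is a continuous nonnegative local martingale whose mean drops from $1$ to $0$ at the single instant $T^*$. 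So your parenthetical claim that ``the loss of mass of a strict local martingale accumulates over a positive time-interval, not at a single instant'' is precisely the assertion that must be proved, and it is false outside the diffusion setting. The paper closes this gap with a model-specific input: Theorem 2.2 of \cite{ekstrom2012dupire} shows that for the local volatility model \eqref{eq:localvolatility} the call and put prices $C(T,K)$ and $P(T,K)$ are continuous in $T$, and put-call parity then yields continuity of $T\mapsto\bE[X_T]$, which is what justifies the gluing steps \eqref{eq:sequalT1} and \eqref{eq:T1betweensandt}. Without citing or reproving such a maturity-continuity result for time-inhomogeneous diffusions, your induction does not close.
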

with 
\begin{equation}\label{eq:Bsigma}
B^{\sigma}:=\left\{T>0: \text{$\sigma(\cdot,x)$ has a local maximum at $T$ for all $x \in (0,\infty)$}\right\}.
\end{equation}
\begin{proof}
We start proving that if $X$ is a martingale, then \eqref{assum:integralinfinityforfixedT} holds. Suppose there exists $T \in (0, \infty) \setminus B^{\sigma}$ such that 
\begin{align}
&\int_c^{\infty}  \frac{x}{\sigma^2(T,x)}dx < \infty.\notag
\end{align}
Assume first that $T \in (T_n, T_{n+1})$ for some $n \in \bN \cup \{0\}$. Then $\underline{\sigma}(x):=\sigma(T,x)$ is a lower bound for $\sigma$ on the interval $[T, T+\delta]$ (on the interval $[T-\delta, T]$) for some $\delta>0$ if $\sigma(\cdot,x)_{|_{[T_{n},T_{n+1}]}}$ is monotone increasing (decreasing) for all $x \in (0,\infty)$. If $T =T_n$ for some $n \in \bN$, then $\sigma(\cdot,x)_{|_{[T_{n-1},T_n]}}$ must be decreasing $\forall x \in (0,\infty)$ and $\sigma(\cdot,x)_{|_{[T_{n},T_{n+1}]}}$ increasing $\forall x \in (0,\infty)$, as $T \notin B^{\sigma}$. Hence, $\underline{\sigma}(x):=\sigma(T,x)$ is a lower bound on $[T_{n-1}, T_{n+1}]$. In any case, $\sigma$ satisfies conditions \eqref{assum:integralfinite} and \eqref{assum:smallerthansigma}, thus Lemma \ref{lem:sigmatolocalmartingale} implies that $X$ is a strict local martingale. 

We now prove that if \eqref{assum:integralinfinityforfixedT} holds, then $X$ is a martingale. For the sake of simplicity, we only consider the case when $\sigma(\cdot,x)_{|_{[0,T_1]}}$ is increasing $\forall x \in (0,\infty)$ and $\sigma(\cdot,x)_{|_{[T_{1},\infty)}}$ is decreasing $\forall x \in (0,\infty)$. The proof can be analogously generalized.

Under this assumption, we have that \eqref{assum:integralinfinityforfixedT} is equivalent to
 \begin{align}
&\int_c^{\infty}  \frac{x}{\sigma^2(T,x)}dx = \infty \text{ for any $T>0$, $T\ne T_1$.}
\end{align}
 
We get
\begin{equation}\label{eq:stbiggerT1}
{\bE^Q}[X_t|\F_s] =  {\bE^Q}[X^{s,X_s}_t|\F_s]=X_s, \quad T_1<s<t,
\end{equation}
where the last equality comes from Lemma  \ref{lemma:sigmatomartingalesgen}, choosing $r=s$ and $\overline{\sigma}^{(T)}(x) := \sigma(s,x)$ for any $T>s$. 
We now show that the same holds for $s=T_1$. %Theorem 2.2 of \cite{ekstrom2012dupire} implies that the function $t \to {\bE^Q}[X_t]$ is continuous, so that in particular 
 Theorem 2.2 of \cite{ekstrom2012dupire} states that the functions $C, P:[0, \infty) \times (0, \infty) \to \reals^+$ defined by
 $$
 C(T,K):= {\bE^Q}[(X_T-K)^+], \quad  P(T,K):= {\bE^Q}[(K-X_T)^+], \qquad (T,K) \in [0,\infty) \times  (0, \infty),
 $$ 
 are continuous with respect to $T$ for any fixed $K>0$.  From this and from the put-call parity formula
 $$
 C(T,K) =  P(T,K) - K +   {\bE^Q}[X_{T}],  \qquad (T,K) \in [0,\infty) \times  (0, \infty),
 $$
 it follows that
 $$
  \lim_{t \downarrow T_1} {\bE^Q}[X_t] =  \lim_{t \downarrow T_1}C(t,K)- \lim_{t \downarrow T_1}P(t,K) + K = C(T_1,K)-P(T_1,K) + K = {\bE^Q}[X_{T_1}].
 $$
Since in our case ${\bE^Q}[X_t]$ is constant for any $t>T_1$ by  \eqref{eq:stbiggerT1}, we get ${\bE^Q}[X_t]={\bE^Q}[X_{T_1}]  $ for any $t>T_1$.  This implies that 
\begin{equation}\label{eq:sequalT1}
{\bE^Q}[X_t|\F_{T_1}] =  X_{T_1}, \quad T_1<t,
\end{equation}
because $X$ is a supermartingale.
Moreover, again by Lemma  \ref{lemma:sigmatomartingalesgen} and using that $\sigma(\cdot,x)_{|_{[0,T_1]}}$ is increasing $\forall x \in (0,\infty)$, we get
\begin{equation}\label{eq:stsmallerT1}
{\bE^Q}[X_t|\F_s] = X_s, \quad s<t<T_1.
\end{equation}
Finally, we have that
\begin{equation}\label{eq:T1betweensandt}
{\bE^Q}[X_t|\F_s] = {\bE^Q}\left[{\bE^Q}[X_t|\F_{T_1}] | \F_s \right] = {\bE^Q}[X_{T_1}|\F_s] = X_s, \quad 0 < s < T_1 \le t,
\end{equation}
where the second equality follows by \eqref{eq:sequalT1}. The last equality holds because $X$ is a supermartingale and ${\bE^Q}[X_{T_1}] =  {\bE^Q}[X_{s}]$ for all $s \le T_1$, since the expectation is continuous and  constant by \eqref{eq:stsmallerT1} for any $s<T_1$.

Putting together equations \eqref{eq:stbiggerT1}, \eqref{eq:sequalT1}, \eqref{eq:stsmallerT1} and \eqref{eq:T1betweensandt} we get that $X$ is a true martingale.

\end{proof}

The results above show that, under some conditions, the behaviour of the volatility for large values of the spatial variable determines if the process $X$ is a strict local martingale, i.e., if it has bubbly dynamics.

In the sequel we use these findings as well as a modified version of Dupire's equation in  \cite{ekstrom2012dupire} in order to {prove the existence of a ``bubble detection function'' in local volatilities models}. 

 \subsubsection{Neural network approximation for local volatility models}\label{sec:neuralapproxlocal}
{We now provide a neural network approximation also for the case $\alpha=1$. To this purpose we first }
 state  the general Dupire equation for call options for local volatility models, see Theorem 2.3 of \cite{ekstrom2012dupire}. 
\begin{theorem}\label{thm:dupireformulageneralized}
Let $X$ be the process in \eqref{eq:localvolatility}. The price $C^{\alpha, X}(T,K)$ in \eqref{eq:pricecallcollateralized} of a call option on $X$ with strike $K>0$ and maturity $T>0$ with collateral requirement $\alpha \in [0,1]$ is the unique bounded classical solution $C^{\alpha, X} \in C^{1,2}\left((0,\infty) \times (0,\infty)  \right) \cap C\left([0,\infty) \times [0,\infty)  \right)$ of the equation
\begin{equation}\label{eq:dupire}
\begin{cases}
\partial_TC^{\alpha, X}= \mathcal{L} C ^{\alpha, X} - (1-\alpha)\partial_Tm \qquad \text{for $(T,K) \in (0, \infty) \times (0, \infty)$} \\
C^{\alpha, X}(0,K)=(x-K)^+ \\
C^{\alpha, X}(T,0)= \bE^Q[X_T]+\alpha m(T),
\end{cases}
\end{equation}
where  $\mathcal{L}$ is the second order differential operator 
$$
\mathcal{L}=\frac{\sigma^2(T,K)}{2} \partial_{KK}
$$
and $m$ is the martingale defect of $X$ introduced in \eqref{eq:martingaledefect}. %Moreover, $m_T(\cdot)$ denotes the derivative of $m(\cdot)$ with respect to time $T$.
\end{theorem}

Note that, since the martingale defect $m$ is increasing with respect to time if $X$ is a strict local martingale, then $ (1-\alpha)\partial_Tm>0$ for $\alpha<1$.

In particular, since  
$$
C^{\alpha, X}(T,0)= \bE^Q[X_T]+\alpha  \left( X_0 - \bE^Q[X_T]\right) = \alpha X_0 + (1-\alpha) \bE^Q[X_T],
$$
%which implies
%$$
% \bE^Q[X_T] = \frac{C^{\alpha, X}(T,0)- \alpha X_0}{1-\alpha}
%$$
if $\alpha<1$ we can write 
$$
\partial_Tm(T)=\partial_T \left( X_0 - \bE^Q[X_T]\right) = -\frac{\partial_T C^{\alpha, X}(T,0)}{1-\alpha}.
$$
Therefore, for any $\alpha \in [0,1]$ it holds
$$
\partial_TC^{\alpha, X}(T,K)= \mathcal{L} C ^{\alpha, X}(T,K) - (1-\alpha)\partial_Tm(T,K)= \frac{\sigma^2(T,K)}{2} \partial_{KK} C ^{\alpha, X}(T,K) + \partial_T C^{\alpha, X}(T,0),
$$
from which we get
\begin{equation}\label{eq:dupiresigma}
\sigma(T, K) = \sqrt{\frac{ 2\partial_{T}\left(C^{\alpha, X}(T,K)-C^{\alpha, X}(T,0)\right)}{ \partial_{KK} {C^{\alpha, X}} (T,K)}}
\end{equation}

 if $\partial_{KK}{C^{\alpha, X}}(T,K)>0$ for any $(T,K) \in  [0,\infty) \times (0,\infty)$. Namely, the PDE in \eqref{eq:dupire} allows to recover the function $\sigma$ from prices of (possibly collateralized) call options.  Together with Theorem \ref{thm:sigmatomartingales}, we use the PDE \eqref{eq:dupire} and \eqref{eq:dupiresigma} for detecting asset price bubbles by looking at prices of call options, by means of neural network approximations. We formalize this in the following.
 
We now assume that call prices are given by \eqref{eq:pricecallcollateralized} with $\alpha =1$, which is the case that  {was not covered by the results in Section \ref{sec:neuralapprox}}. We denote by $C^X$ the price given in such a way, with underlying $X$ defined by  \eqref{eq:localvolatility}, where $\sigma$ satisfies Assumption \ref{ass:monotone} and
\begin{equation}\label{eq:Sigma}
 \partial_{KK}{C^{X}}(T,K)>0 \text{ for any $(T,K) \in  [0,\infty) \times (0,\infty)$}.
 \end{equation}
 \black{Since $C^{X}(T,0)=X_0$, equation \eqref{eq:dupiresigma} becomes the classical Dupire formula
 \begin{equation}\label{eq:classicaldupiresigma}
\sigma(T, K) = \sqrt{\frac{ 2\partial_{T}C^{X}(T,K)}{ \partial_{KK} {C^{X}} (T,K)}}.
\end{equation}
 }
  We prove an equivalent of Proposition \ref{prop:existenceF} in the current setting. %An analogous result can be proved with slight variations also in the case $\alpha \in [0,1)$.

%Our aim is to prove the existence of a sequence of neural network maps $(\hat{F}^n)_{n \in \bN}$, $\hat{F}^n \colon (0,\infty)^n \to [0,1]$, see Definition \ref{def:neuralnetwork}, with the following property: if $x$ is a vector of call option prices written on an underlying $X$ satisfying \eqref{eq:localvolatility}, with collateral constant $\alpha$ and maturities and strikes $(T_i,K_i)$, $i=1,\dots,n$, chosen well enough, then $\hat{F}^n(x) \approx 1$ if $X$ is a strict local martingale under the pricing measure $Q$ and $\hat{F}^n(x) \approx 0$ otherwise, with accuracy increasing with $n$ and tending to $100\%$ for $n \to \infty$. Once such a sequence is given, we can use $\hat{F}^n$ to detect bubbles: if we are given observed market prices of call options $\hat{C}(T_i,K_i)$, $i=1,\dots, n$, we evaluate  $\hat{F}^n(\hat{C}(T_1,K_1),\ldots,\hat{C}(T_n,K_n))$ to detect if there is a bubble or not. 

%From now on, we assume without loss of generality that the option prices are given by equation \eqref{eq:pricecallcollateralized} with $\alpha = 1$.  
%
Introduce the spaces
$$
\M_{loc}^{locvol}= \{\text{$X=(X_t)_{t \ge 0}$ with dynamics \eqref{eq:localvolatility}, where $\sigma$ satisfies Assumption \ref{ass:monotone} and \eqref{eq:Sigma}}\},
$$
and
\begin{equation}\notag
\M_{L}^{locvol}= \{\text{$X\in \M_{loc}^{locvol}$, $X$  strict local martingale}\}.
\end{equation}
Consider the set 
\begin{equation}\label{eq:mathcalXlocvol}
\X^{locvol} := \{C^X \,:\, X \in \M_{loc}^{locvol} \} \subset C^{1,2}([0,\infty) \times (0,\infty),(0,\infty))
\end{equation}
endowed with the Borel sigma-algebra of the Fr\'echet space $C^{1,2}([0,\infty) \times (0,\infty),(0,\infty))$ with topology induced by the countable family of seminorms
\begin{equation}\label{eq:seminormsC12}
\|f\|_{i,j,n} := \sup  \left\{|\partial^{(i,j)}_{t,x}f(t,x)|, \text{ } (t,x) \in [0,n] \times [1/n, n] \right\},
\end{equation}
for $n \in \bN$ and $(i,j)= (0,0), (0,1), (0,2),(1,0).$

\begin{proposition}\label{prop:existenceFlocal}
{Under Assumption \ref{ass:monotone}}  there exists a measurable function $F \colon \X^{locvol} \to \{0,1\}$ such that $F(C^X) = \mathbbm{1}_{\{X \in \M_{L}^{locvol}\}}$. In particular,  
 \begin{equation}\label{eq:Ftodectedbubble2}
F(f)\black{=}\mathbbm{1}_{\left\{\text{ there exists $t \in (0,\infty) \setminus B^{(\partial_t f, \partial_{xx} f)}$ such that $\int_{c}^{\infty} x\frac{\partial_{xx} f(t,x)}{\partial_{t} f(t,x)}dx < \infty$} \right\}}, \quad f \in{\X^{locvol}},
\end{equation}
for arbitrary fixed $c>0$ and with 
\begin{equation}\label{eq:Bf}
B^{(\partial_t f, \partial_{xx} f)}:=\left\{T>0: \text{$\sigma:=\sqrt{\frac{\black{2}\partial_t f}{\partial_{xx}f}}$ has a local maximum at $T$ for all $x \in (0,\infty)$}\right\}.
\end{equation}
\end{proposition}

\begin{proof}
We have that ${\X^{locvol}} \subset C^{1,2}([0,\infty) \times (0,\infty),(0,\infty))$ by \black{Theorem \ref{thm:dupireformulageneralized}}. 
Fix $c>0$, and consider $X \in \M_{loc}^{locvol}$ with $C^X \in {\X^{locvol}}$. 
% Since $\sigma$ is strictly positive and continuous, 
 Theorem \ref{thm:Xmartingaleifconditionforanyt} implies that {for any process $X$ of the form \eqref{eq:localvolatility} where the local volatility function $\sigma$ satisfies Assumption \ref{ass:monotone} it holds}
\begin{equation}\label{eq:alternativeformulationforfixedc}
\text{$X \in \M_{L}^{locvol}$ if and only if $\exists$ $t\in (0,\infty) \setminus B^{\sigma}: \int_c^{\infty} \frac{x}{\sigma^2(t,x)}dx < \infty$,}
\end{equation}
with $B^{\sigma}$ in \eqref{eq:Bsigma}. Consider now $F \colon{\X^{locvol}} \to \{0,1\}$ as in \eqref{eq:Ftodectedbubble2}. 
From \eqref{eq:classicaldupiresigma} and \eqref{eq:alternativeformulationforfixedc}, it is clear that $F(C^X) = \mathbbm{1}_{\{X \in {\M^{locvol}_L\}}}$ for any $C^{X} \in {\X^{locvol}}$.
% Introduce $\tilde F \colon \X \to \Sigma$ by $\tilde F(f)=\sigma$ with
%$$
%\sigma(x):= \sqrt{\frac{1}{2}\frac{f_t(t,x)}{f_{xx}(t,x)}}, \qquad x \in (0,\infty),
%$$
%for a given $t >0$. Define then $\hat F:\Sigma \to \{0,1\}$ by
%$$
%\hat F(\sigma) := \mathbbm{1}_{\{\text{there exists $c>0$ such that  $\int_{c}^{\infty} \frac{x}{\sigma^2(x)}dx < \infty$}\}}, \qquad \sigma \in \Sigma,
%$$
%and call $F := \hat F \circ \tilde F$. It is clear that 
We now prove that $F$ is measurable. {Fix a volatility function $\sigma$ satisfying Assumption \ref{ass:monotone}}. First note that, if 
\begin{equation}\label{eq:considerQ}
D^{\sigma, \mathbb{Q}}:=\left(\left\{t >0:  \int_c^{\infty} \frac{x}{\sigma^2(t,x)}dx < \infty \right\} \cap \mathbb{Q}\right) \setminus B^{\sigma} = \emptyset,
\end{equation}
then
$$
D^{\sigma}:=\left\{t >0 :  \int_c^{\infty} \frac{x}{\sigma^2(t,x)}dx < \infty \right\} \setminus  B^{\sigma}  = \emptyset.
$$
Indeed, suppose $D^{\sigma} \ne \emptyset$ and consider $t \in D^{\sigma}$. Then either $t$ is a local minimum for $\sigma(\cdot, x)$ for all $x \in (0,\infty)$ or there exists an interval $I_t$ with $t \in I_t$ on which $\sigma(\cdot, x)$ is monotone increasing for all $x \in (0,\infty)$ or monotone  decreasing for all $x \in (0,\infty)$: in all cases, it is possible to find a positive $\bar t \in \mathbb{Q} \setminus B^{\sigma}$ such that $ \int_c^{\infty} \frac{x}{\sigma^2(\bar t,x)}dx < \infty$,  hence $D^{\sigma, \mathbb{Q}} \ne \emptyset$.

Then we can write 
\begin{equation}\label{eq:Falpha1}
F(f)= 1- \prod_{t \in \mathbb{Q} \cap \left((0,\infty) \setminus B^{(\partial_t f, \partial_{xx} f)} \right)}(1-F_{t}(f)) = 1- \prod_{t \in \mathbb{Q} \cap (0,\infty)} (1-F_{t}(f)\mathbb{1}_{\left\{t \notin B^{(\partial_t f, \partial_{xx} f)}\right\}}), 
\end{equation}
$f \in {\X^{locvol}}$, with $B^{(\partial_t f, \partial_{xx} f)}$ as in \eqref{eq:Bf} and where $F_t \colon {\X^{locvol}} \to \{0,1\}$ is defined by
\begin{equation}\label{eq:Ftalpha1}
F_t(f):=\mathbbm{1}_{\left\{\int_{c}^{\infty} x\frac{\partial_{xx} f(t,x)}{\partial_{t} f(t,x)}dx < \infty \right\}}, \quad f \in {\X^{locvol}},
\end{equation}
for any $t>0$. Note that for any $t \in \mathbb{Q} \cap (0,\infty)$ and any $\delta \in (0,t)$ we have
$$
\mathbb{1}_{\left\{t \notin B^{(\partial_t f, \partial_{xx} f)}\right\}} = 1 -  \prod_{x  \in \mathbb{Q} \cap (0,\infty)}  \prod_{s  \in \mathbb{Q} \cap (0,\delta)}  \mathbb{1}_{\left\{\frac{\partial_t f(t-s,x)}{\partial_{xx}f(t-s,x)}<\frac{\partial_t f(t,x)}{\partial_{xx}f(t,x)}\right\}}\mathbb{1}_{\left\{\frac{\partial_t f(t+s,x)}{\partial_{xx}f(t+s,x)}<\frac{\partial_t f(t,x)}{\partial_{xx}f(t,x)}\right\}},
$$
which is measurable. Moreover, we can write $F_t = \hat F \circ \tilde{F}_t$, where the operator
 $\tilde F_t \colon {\X^{locvol}} \to C\left((0,\infty), (0,\infty) \right)$ is given by
$$
\tilde F_t (f):= I_d \cdot \frac{\partial_{xx} f(t, \cdot)}{\partial_t f(t,\cdot)}
$$
for any $t>0$, with $I_d(x)=x$, and $\hat F:C\left((0,\infty), (0,\infty) \right) \to \{0,1\}$ equal to
$$
\hat F (g) = \mathbbm{1}_{\{\int_c^{\infty} g(x) dx < \infty \} }.
$$
The space $C\left((0,\infty), (0,\infty) \right)$ is a Fr\'echet space equipped with the Borel sigma-algebra with respect to the topology induced by the family of seminorms 
\begin{equation}\label{eq:seminormsC}
\|g\|_{m} := \sup  \left\{|g(x)|, \text{ } x \in [1/m,m] \right\}
\end{equation}
for $m \in \mathbb{N}$. Note that $\tilde{F}_t$ is well defined for any $t>0$ by \eqref{eq:Sigma} and by \eqref{eq:classicaldupiresigma}, since $\sigma$ is strictly positive. Moreover, $\tilde{F}_t$ is measurable for all $t>0$ because it is continuous with respect to the uniform convergence on compact sets away from zero induced by the topologies of the Fr\'echet spaces $C^{1,2}([0,\infty) \times (0,\infty),(0,\infty))$ on ${\X^{locvol}}$ and $C\left((0,\infty), (0,\infty) \right)$ with the seminorms in \eqref{eq:seminormsC12} and \eqref{eq:seminormsC}, respectively. \\
Coming now to $\hat F$, it can be written as 
$$
\hat F(g)=  \mathbbm{1}_{\{I(g) < \infty \} }
$$
where $I : C\left((0,\infty), (0,\infty) \right) \to [0,\infty]$ is the integral operator
$$
I(g) = \int_c^\infty g(x)dx, \quad g \in C\left((0,\infty), (0,\infty) \right).
$$
Clearly, $I = \lim_{k \to \infty} I_k$, where the functions $I_k: C\left((0,\infty), (0,\infty) \right) \to [0,\infty)$ with
$$
I_k(g) :=  \int_c^k g(x)dx, \quad g \in C\left((0,\infty), (0,\infty) \right), \quad k \in \bN,
$$
are measurable because the operator $g \to \int_c^{k} g(x) dx$ is continuous for any finite $k$. Then $I$ is measurable and $\hat F$ as well.

The function $F_t$ in \eqref{eq:Ftalpha1} is thus measurable  for every $t>0$ since it is the composition of measurable functions. Hence, the function 
$F$ is measurable by \eqref{eq:Falpha1}. 

%$$
%\hat F(\sigma) := \mathbbm{1}_{\{\text{there exists $c>0$ such that  $\int_{c}^{\infty} \frac{x}{\sigma^2(x)}dx < \infty$}\}}, \qquad \sigma \in \Sigma,
%$
\end{proof}

\black{
\begin{remark}
In Proposition \ref{prop:existenceFlocal}, we focus on $\alpha=1$ because the case $\alpha<1$ has already been covered in Section \ref{sec:neuralapprox}. Alternatively, if $\alpha<1$, using equation \eqref{eq:dupiresigma} we can define a new detection function  $F \colon \X^{locvol} \to \{0,1\}$ similar to \eqref{eq:Ftodectedbubble2} whose measurability can be proved using the same arguments as in the proof of Proposition \ref{prop:existenceFlocal}.
\end{remark}
}

\begin{remark}
The assumption that $ \partial_{KK} {C^{X}} (T,K)>0$ for any $(T,K) \in [0,\infty) \times (0,\infty)$ in \eqref{eq:Sigma}  is equivalent to 
$$
\partial_T {C^{X}} (T,K)>0, \quad (T,K) \in [0,\infty) \times (0,\infty),
$$
which is usually satisfied for the observed market prices. 

The above conditions are satisfied, for instance, if $X_T$ admits a strictly positive density $g_T^X : \reals_+ \to \reals_+$ for any $T \ge 0$. Indeed, under this assumption it holds that
$$
 \partial_{KK} {C^{X}}(T,K) =  \partial_{KK}  \int_K^{\infty} (x-K) g_T^X(x) dx = g_T^X(K) > 0
$$
for any $K>0$. 
\end{remark}

We now give a result {analogous} to Proposition \ref{prop:existenceapproximation}. The proof is follows the same steps.
\begin{proposition}\label{prop:existenceapproximationlocvol}
 Define the set
\begin{equation}\notag
A=[0,\infty) \times [C,\infty) 
\end{equation}
for some fixed $C>0$. Let $(T_n, K_n)_{n \in \bN}$  be a sequence of maturities and strikes such that the set $ \{(T_n, K_n), n \in \bN\}$ is dense in $A$. Let $\mu$ be a probability measure on the space $\X^{locvol}$ introduced in \eqref{eq:mathcalXlocvol} such that there exists a probability measure $\nu$ on ${\X^{locvol}}$ with
	\begin{equation}\label{eq:measurenotzerolocvol}
	\nu\left(\left\{f\in \X^{locvol} \text{ such that } f(T_i, K_i) = g(T_i,K_i)\text{ for all $i \in \bN$}\right\}\right)>0 \quad \text{for any $g \in \supp(\mu)$}.
	\end{equation}
Then there exists a sequence of functions $(F^n)_{n \in \bN}$, $F^n \colon \reals^n \to [0,1]$ such that%which converges to $F$ pointwise, i.e., such that for any fixed $\sigma \in \Sigma$ it holds
\begin{equation}\notag
\int_{\X^{locvol}} \left| F^n(g(T_1,K_1),\ldots,g(T_n,K_n))-F(g) \right|^pd\mu(g)  \xrightarrow[n \to \infty]{}0,
\end{equation}
for any $p \in [1,\infty)$, where $F \colon \X^{locvol} \to \{0,1\}$  is the function introduced in {\eqref{eq:Ftodectedbubble2}}.
%In particular, given a probability measure $\mu$ on $(\Sigma,\B^{\Sigma})$, for any $\varepsilon >0$ there exists $n_0 \in \bN$ such that for all $n \geq n_0$ we have 
%\begin{equation}\label{eq:resultconvergence}
%\int_{\Sigma}|F^n(C^\sigma(T_1,K_1),\ldots,C^\sigma(T_n,K_n))  - F(C^\sigma)| \mu(d \sigma) <\varepsilon.
%\end{equation}
For fixed $n \in \mathbb{N}$, the function $F^n$ can be chosen as follows:
%	\begin{equation}\label{eq:Fn}
%	F^n(c):= \frac{1}{|S_n(c)|} \sum_{\tilde \sigma \in S_n(c)} F\left(C^{\tilde{\sigma}}\right)
%	\end{equation}
%	where
%	\begin{equation}\label{eq:Snc}
%	S_n(c) := \left\{ \tilde \sigma \in \Sigma \text{ such that } C^{\tilde{\sigma}}(T_i, K_i) = c_i \text{ for all $i = 1, \dots, n$}\right\}.
%	\end{equation}
\begin{align}\notag
	F^n(c^n):= \begin{cases}
	\frac{1}{\nu\left(S_n(c^n)\right)} \int_{S_n(c^n)}F(f)d\nu(f) \quad &\text{if $\nu\left(S_n(c^n)\right) > 0$},\\
	0 \quad &\text{otherwise},
	\end{cases}
	\end{align}
		$c^n \in \reals^n$, where
	\begin{equation}\label{eq:Snc}
	S_n(c^n) := \left\{f\in \X^{locvol} \text{ such that } f(T_i, K_i) = c^n_i \text{ for all $i = 1, \dots, n$}\right\}, \quad c^n \in \reals^n.
	\end{equation}
\end{proposition}

We then have the following result. The proof follows the same steps as for Theorem \ref{theo:existencenetwork}.

\begin{theorem}\label{theo:existencenetworklocvol}
 Fix $p\in [1,\infty)$  and let $(T_n, K_n)_{n \in \bN}$  be the sequence of maturities and strikes introduced in Proposition \ref{prop:existenceapproximationlocvol}. Also let $\mu$ be a probability measure on $\X^{locvol}$ such that there exists a probability measure $\nu$ on $\X$ which satisfies \eqref{eq:measurenotzerolocvol}. Then for any $\epsilon >0$ there exists an $n \in \bN$ and a neural network $\hat{F}^n \colon \reals^n \to [0,1]$ such that
$$
\int_{\X^{locvol}} |\hat{F}^n(g(T_1,K_1),...,g(T_n,K_n)) - F(g)|^p {d\mu(g)} < \epsilon.
$$
% Fix $n \in \bN$, $p>0$ and let $F$ be the function defined in \eqref{eq:Ftodectedbubble1} and \eqref{eq:Ftodectedbubble2}. Then for any $\epsilon>0$ there exists a neural network $\hat{F}^n \colon \reals^{n} \to \{0,1\}$ such that
% $\lVert F - \hat F^n \rVert_p \le \epsilon$, where $\lVert \cdot \rVert_p$ is the $L^p$-norm.
\end{theorem}

 \section{Numerical experiments}\label{sec:numexperiments}
 \subsection{The methodology}\label{sec:nummethodology}
 %The method illustrated in Section \ref{sec:theoretical} does not require any parameter estimation related to the asset price, and it is model independent when $\alpha<1$. 
 {In Section \ref{sec:theoretical} we have provided a theoretical foundation for the bubble detection methodology explained in the introduction. The method does not require any parameter estimation related to the asset price, and it is model independent when $\alpha<1$.} Fixing $\alpha=1$, we now show via numerical experiments that our approach works not only within a class  of models, but also if the network is trained using a certain class of stochastic processes (like local volatility models) and tested within another class (for example, stochastic volatility models). \black{In Section \ref{sec:marketdata}}, we also test the method with market data associated to assets involved in the new tech bubble burst at the beginning of 2022 (see among others \cite{Bercovici}, \cite{Ozimek}, \cite{Seria}, \cite{Sharma}) finding a close match between the output of the network and the expected results.

{\begin{remark}
We show the results for a collateral constant $\alpha=1$ because this is the most common choice in the literature, see for example \cite{dias2020note},  \cite{heston2007options} and \cite{JarrowProtter2010}. %Moreover, in this way we show that our methodology also works in the case not covered by the results in Section \ref{sec:neuralapprox}. 
For completeness, we have repeated the same tests for $\alpha=\frac{1}{2}$ finding not inferior accuracies.
\end{remark}}

\begin{remark}
The theoretical results proved in Section \ref{sec:theoretical} provide an important justification for our approach: they show that in our setting bubbles can indeed be detected using a neural network with call option price inputs up to an error that tends to $0$ as the number of call option prices and the size of the neural network \blue{become} arbitrarily large. The functional form of the bubble indicator $F$ in \eqref{eq:Ftodectedbubble1} and \eqref{eq:Ftodectedbubble2} is not explicitly used in the numerical algorithm, but instead an approximation of $F$ is learnt from data. An alternative approach for bubble detection could aim to directly approximate the functional form of $F$ in \eqref{eq:Ftodectedbubble1} and \eqref{eq:Ftodectedbubble2} via an extrapolation of the calibrated local volatility function as in \cite{jarrow2011detect}. One of the key features of our approach is that no choice of extrapolation needs to be made explicitly, but instead learning is only based on the available call option prices. 
\end{remark}

\black{For the sake of clarity, before showing our results we specify how our methodology works:
\begin{itemize}
\item We first fix a constant $x_0 \in \mathbb{R}$ and a set of strikes and maturities $(T_j,K_j)_{1 \le j \le m}$, $m \in \bN$.
\item We take a set of local martingales $(X^i)_{1 \le i \le n_{train}}$, $n_{train} \in \bN$, with possibly different dynamics and such that $X^i_0=x_0$ for any $1 \le i \le n_{train}$. Half of them are true martingales, half strict local martingales.
\item We consider a binary classification problem with $m$-dimensional input and output in $\{0,1\}$. Each of the $n_{train}$ input data points consists of the call option prices with underlying $X^i$ for strikes and maturities $(T_j,K_j)_{1 \le j \le m}$. The label associated to data point $i$ is the indicator $\mathbb{1}_{\{X^i\text{ is a strict local martingale}\}}$. The call option prices are computed either via Monte-Carlo techniques or, when possible, analytically.
\item Based on these data points we train a neural network in order to make it learn the indicator function from the prices. Since this is a supervised learning, binary classification problem, we choose binary cross entropy loss function and Sigmoid activation function for the output layer. 
\end{itemize}
We test the network on a set of new local martingales $(Y_i)_{1 \le i \le n_{test}}$, $n_{test} \in \bN$, with $Y^i_0=x_0$ for any $1 \le i \le n_{test}$, again with the property that half of them are strict local martingales and half true martingales. In particular, our choice for the output layer's activation function implies that, for any stock,  the network returns a number $p\in(0,1)$ which can be interpreted as the probability that the underlying under consideration is a strict local martingale. If $p > 0.5$, we label the underlying to be a strict local martingale according to the network. As performance metrics, both on the training and test set, we consider \emph{accuracy} $a$ 
and \emph{$F_1$ score}, defined as
$$
a:=\frac{\text{true positives}+\text{true negatives}}{\text{true positives}+\text{true negatives}+\text{false positives}+\text{false negatives}}
$$
and
$$
F_1:=2\frac{\text{precision}\cdot\text{recall}}{\text{precision}+\text{recall}},
$$
respectively, where
$$
\text{precision}:=\frac{\text{true positives}}{\text{true positives}+\text{false positives}}, \quad \text{recall}:=\frac{\text{true positives}}{\text{true positives}+\text{false negatives}}.
$$
In our framework, \emph{true positives} (\emph{true negatives}) are strict local martingales (true martingales) correctly labelled as strict local martingales (true martingales), and \emph{false positives} (\emph{false negatives}) are true martingales (strict local martingales) uncorrectly labelled as strict local martingales (true martingales).
\\
For both accuracy and $F_1$ score, and for all the cases under investigation, we compute the average over $10$ runs of the network's training, for fixed training and test data \footnote{\black{Since the training of the network is not deterministic, different outputs might be obtained for different runs.}}.} 

\black{%In Sections \ref{sec:localvol} and \ref{sec:stochvol}, 
Unless differently specified, we consider a neural network having a first layer with $60$ nodes, two hidden layers with $30$ and $7$ nodes, respectively, and ReLu activation function for the input and hidden layers. We use an Adam optimization algorithm with learning rate equal to $0.001$ and $20$ epochs. These choices strike a balance between accuracy and efficiency, enabling the algorithm to yield high performance indicators in a significant number of cases without excessive computational overhead. In Section \ref{sec:stochvol}, where the stochastic models in the training and test set are significantly different, we find that other architectures and choices of the learning rate deliver better performances in some cases. %We always mention any modifications to the choices above.
}

 \subsection{Local volatility models}\label{sec:localvol}
% We now test the approach described in Section \ref{sec:theoretical}. Call prices are here considered fixing $\alpha=1$ in \eqref{eq:pricecallcollateralized} as it is the most used pricing formula under bubble, see for example \cite{JarrowProtter2010}.
% As a first experiment, 
 First, we consider a displaced CEV process $(X_t)_{t \ge 0}$ with dynamics
 \begin{equation}\label{eq:CEV}
 dX_t = \sigma \cdot (X_t+d)^{\beta} dW_t, \quad t \ge 0, \quad X_0 = x_0 > 0,
 \end{equation}
where $(W_t)_{t \ge 0}$ is a one-dimensional standard Brownian motion and $\sigma, \beta, d >0$.

\begin{proposition}\label{prop:CEV}
The SDE \eqref{eq:CEV} has a unique strong solution, which is a true martingale if and only if $\beta \le 1$.  
\end{proposition}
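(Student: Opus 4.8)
The plan is to deduce the statement from Theorem~\ref{thm:sigmatomartingales}, applied with the time-homogeneous local volatility function $\sigma_0(x):=\sigma\,(x+d)^{\beta}$ serving simultaneously as the lower bound $\underline{\sigma}$ and the upper bound $\overline{\sigma}$. The actual content then reduces to a one-line integrability computation together with a routine check of the regularity hypotheses.

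First I would verify that $\sigma_0$ fits the required framework. Since $d>0$, the map $x\mapsto(x+d)^{\beta}$ is $C^{\infty}$ and strictly positive on $[0,\infty)$; hence $\sigma_0$, regarded as a function on $[0,\infty)\times(0,\infty)$ that is constant in $t$, is continuous, is $C^1$ (so Lipschitz, hence locally H\"older continuous with exponent $1/2$) in the spatial variable on every compact subset of $(0,\infty)$, and satisfies $\sigma_0(t,x)=\sigma(x+d)^{\beta}\le\sigma(1+d)^{\beta}$ for all $(t,x)\in[0,\infty)\times(0,1]$. Thus Assumption~\ref{ass:sigma} holds, which yields existence and uniqueness of a strong solution of \eqref{eq:CEV}, and the constant functions $\underline{\sigma}=\overline{\sigma}=\sigma_0$ are continuous, locally H\"older-$1/2$ on $(0,\infty)$, and trivially bracket $\sigma_0$, so the hypotheses of Theorem~\ref{thm:sigmatomartingales} are met.

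Next I would carry out the integrability analysis. For any $c>0$,
\[
\int_c^{\infty}\frac{y}{\sigma_0^2(y)}\,dy=\frac{1}{\sigma^2}\int_c^{\infty}\frac{y}{(y+d)^{2\beta}}\,dy ,
\]
and since $y/(y+d)^{2\beta}\sim y^{1-2\beta}$ as $y\to\infty$, with a continuous positive integrand on $(0,\infty)$, this integral is finite if and only if $1-2\beta<-1$, i.e.\ $\beta>1$, and equals $+\infty$ for every $c>0$ whenever $\beta\le1$. Consequently, if $\beta\le1$ then part~1 of Theorem~\ref{thm:sigmatomartingales} gives that $X$ is a true martingale, while if $\beta>1$ then part~2 gives that $X$ is a strict local martingale, hence not a true martingale. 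As these two cases are exhaustive, $X$ is a true martingale if and only if $\beta\le1$.

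I do not expect a genuine obstacle here. The only point deserving a word of care is the state space of $X$ when $\beta<1/2$, in which case the auxiliary process $X+d$ (a driftless CEV process) may reach $0$ in finite time and the solution is continued by absorption; this is immaterial for the martingale dichotomy above, which is driven solely by the growth of $\sigma_0$ at $+\infty$ through Theorem~\ref{thm:sigmatomartingales}. As an alternative to invoking Theorem~\ref{thm:sigmatomartingales}, one could quote the classical characterisation that the driftless CEV process $d(X_t+d)=\sigma(X_t+d)^{\beta}dW_t$ is a true martingale precisely when $\beta\le1$ and note that adding the constant $d$ does not affect the (local) martingale property, but staying inside the results already proved in the paper keeps the argument self-contained.
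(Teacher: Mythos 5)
Your integrability computation and regularity checks are fine, but the main route you propose has a gap: you apply Theorem~\ref{thm:sigmatomartingales} to the displaced process $X$ itself, taking $\sigma_0(x)=\sigma(x+d)^{\beta}$ as the local volatility function on $(0,\infty)$. That theorem (like the whole framework of Section~\ref{sec:theoreticallocalvolatility}) concerns \emph{the process in \eqref{eq:localvolatility}}, which is by construction a positive process with state space $(0,\infty)$. For $d>0$ the displaced CEV is not such a process: since $\sigma_0(0+)=\sigma d^{\beta}>0$, the solution of \eqref{eq:CEV} started at $x_0>0$ crosses $0$ with positive probability for \emph{every} $\beta>0$ and lives on $(-d,\infty)$. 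The paper flags exactly this immediately after the proposition (``$X$ is in general not positive but only bounded from below''). So the caveat you raise --- about $\beta<1/2$ and $X+d$ being absorbed at $0$ --- is not the real issue; the real issue is that $X$ itself exits $(0,\infty)$, so the displaced CEV is not ``the process in \eqref{eq:localvolatility}'' with volatility $\sigma_0$, and the theorem cannot be invoked for it as stated. The conclusion happens to be unaffected (loss of the martingale property is driven only by the behaviour at $+\infty$), but that is precisely what needs an argument.

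The paper's proof supplies that argument by the shift you mention only as an ``alternative'': it sets $Y_t=X_t+d$, so that $Y$ solves \eqref{eq:CEV} with $d=0$ and $Y_0=X_0+d>0$, is a genuinely positive process with volatility $\sigma y^{\beta}$ satisfying Assumption~\ref{ass:sigma}, applies Theorem~\ref{thm:Xmartingaleifconditionforanyt} (its own iff-criterion, rather than the two-sided bounds of Theorem~\ref{thm:sigmatomartingales}; for a time-independent volatility the two give the same dichotomy via the same power counting $\int_c^{\infty}y^{1-2\beta}\,dy<\infty \iff \beta>1$ that you carried out), and then transfers the conclusion to $X=Y-d$, since subtracting a constant preserves the (strict local) martingale property. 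So the repair is one line and is essentially already in your closing remark; you should promote it to the main argument rather than applying the positive-process machinery directly to a process that is not positive.
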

\begin{proof}
When $d=0$, the SDE \eqref{eq:CEV} admits a unique strong solution since the function $\sigma(x)=x^{\beta}$, $x > 0$, satisfies Assumption \ref{ass:sigma}. A process $X$ with dynamics given by \eqref{eq:CEV} for general $d>0$ is of the form $X_t = Y_t - d$, $t \ge 0$, where $Y$ is the unique strong solution of \eqref{eq:CEV} with $d=0$ and $Y_0=X_0+d$. By Theorem \ref{thm:Xmartingaleifconditionforanyt}, $Y$ is a true martingale if and only $\beta \le 1$, and thus the same holds for $X$.
\end{proof}
The proof of the above proposition shows that all the results of Section \ref{sec:theoretical} also apply to \eqref{eq:CEV}, even if $X$ is in general not positive but only bounded from below. \black{Based on this result, we proceed to show our numerical experiments in Sections \ref{sec:randomdisplacement} and \ref{sec:fixeddisplacement}}.

\subsubsection{Random displacement}\label{sec:randomdisplacement}
As a first experiment, we work with $n_{train} \in \bN$ different underlyings, all following \eqref{eq:CEV} with fixed $\sigma>0$ and different, randomly chosen parameters $d$ and $\beta$. In particular, for any $i =1, \dots, n_{train}$, the underlying $X^i$ has displacement $d_i$ uniformly distributed in an interval $[0,D]$ for given $D>0$ and strictly positive exponent $\beta_i$. We let $\beta_i$ be uniformly distributed in an interval $(a,1]$ with $a \in (0,1)$ in $n_{train}/2$ cases and in an interval $(1,A)$ with $A>1$ for the other $n_{train}/2$ cases, so that we consider $n_{train}/2$ true martingales and $n_{train}/2$ strict local martingales. 
We also fix strikes $0<K_1<\dots<K_m$ and maturities $0<T_1<\dots<T_l$ for $l, m \in \mathbb{N}$. 
Note that  the values of the calls can be computed analytically, as indicated for example in \cite{lindsay2012simulation}. 

We choose $n_{train} =100000$, $n_{test}=5000$, $a=0.2$, $A=2$, $D=0.2$, $\sigma=1$, $X_0=2$, $100$ equally spaced maturities from $T_1=1$ to $T_l=2$. We consider two sets of $51$ equally spaced strikes: one from $K_1=1.8$ to $K_m=2.3$ (``at-the-money strikes'') and one \blue{from $K_1=3.5$ to $K_m=4$} ( ``out-of-the-money strikes''). For strikes in $[1.8, 2.3]$, we obtain:
\begin{itemize}
\item Accuracies of $99.1\%$ for both the training and test data; 
\item $F_1$ scores of $0.992$ and $0.991$ for the training and test data, respectively; 
\end{itemize}
For strikes in $[3.5, 4]$, we obtain:
\begin{itemize}
\item Accuracies of $99.5\%$  and  $99.4\%$ for the training and test data, respectively; 
\item $F_1$ scores of $0.995$ and $0.994$ for the training and test data, respectively.
\end{itemize}

\black{As a further experiment, we give the network the same matrix as before but where the last element of the $i$-th row is now given by the martingale defect of $X^i$, $1 \le i \le n_{train}$. In this case, the network has to learn the martingale defect from the option prices. We choose linear activation function for the output layer and the $R^2$ coefficient of determination score of the computation of the martingale defect as a performance indicator.} This is given by 
$$
R^2:=1-\frac{\sum_{i=1}^{n_{test}}  (y_i-f_i)^2}{\sum_{i=1}^{n_{test}} (y_i-\bar y)^2},
$$
where $y_i$ and $f_i$, $i=1,2, \dots, n$, are the true and predicted values, respectively, and $\bar y$ is the average of the true values. The best possible score is therefore 1. 
 \black{In our case, on the test set we get an $R^2$ coefficient of determination score of $0.994$ for strikes in $[1.8, 2.3]$ and of $0.999$ for strikes in $[3.5, 4]$.}

\black{All the metrics show better results for strikes out-of-the-money. A possible explanation is the following: looking at equation \eqref{eq:pricecallcollateralized} we note that the price of a call option in presence of a bubble diverges from the one without bubble by a term proportional to the martingale defect $m(T) = X_0 - \bE^Q[X_T]$, and this term has a big impact on the price of the option when $C(T,K)$ in  \eqref{eq:pricecallcollateralized} is small, i.e., for large values of the strike $K$. }

\subsubsection{Train with no displacement, test with a strictly positive displacement}\label{sec:fixeddisplacement}
We now repeat the experiment described in Section \ref{sec:randomdisplacement} with the difference that this time the displacements are not random, and they differ from training and testing. In particular, we train and test the network with data generated by an underlying in \eqref{eq:CEV} with $d=0$ and $d=0.5$, respectively. In this way, we can check if the network is able to detect bubbles also when it is trained and tested with data coming from processes having shifted dynamics. All the other parameters are the same as in Section \ref{sec:randomdisplacement}. \black{For strikes in $[1.8, 2.3]$, we obtain:
\begin{itemize}
\item Accuracies of $99.9\%$ and $86.9\%$ for the training and test data, respectively; 
\item $F_1$ scores of $0.999$ and $0.883$ for the training and test data, respectively; 
\end{itemize}
For strikes \blue{in} $[3.5, 4]$ we get:
\begin{itemize}
\item Accuracies of $99.9\%$ and $91.1\%$ for the training and test data, respectively; 
\item $F_1$ scores of $0.999$ and $0.918$ for the training and test data, respectively; 
\end{itemize}}
\black{As before, strikes out-of-the-money yield better results. However, we observe smaller accuracies and $F_1$-score in the test set} than in the first case: this can be expected because of the shifted dynamics of the underlying processes that produce the prices to train and test the network, respectively. Note that  the displacement we use to test the network is bigger than the biggest possible displacement in the first experiment.

\black{Very similar results are observed if we  instead train the network with displacement $d=0.5$ and test with $d=0$.} %we observe the following:

In all the experiments, the accuracy of the results gets smaller if we decrease the number of strikes or the number of maturities, keeping the intervals the same.

\subsection{Stochastic volatility models}\label{sec:stochvol}

As an extension of the setting of Section \ref{sec:localvol}, we now include stochastic volatility models into our numerical experiments. In particular, we check how our methodology
works when the network is fed with data coming from stochastic volatility processes and tested looking at local volatility models, and vice-versa. Since the nature of
the two classes of models is deeply different, this analysis constitutes a robust test for our methodology.

In particular, we consider the SABR model and the class of stochastic volatility models introduced in \cite{sin1998complications}. The SABR volatility model $F=(F_t)_{t \ge 0}$ is defined as the unique strong solution of the SDE
\begin{align}
dF_t &= \sigma_t (F_t)^{\gamma} dW_t, \qquad t \ge 0, \label{eq:F} \\
d\sigma_t &= \alpha \sigma_t dZ_t, \qquad t \ge 0, \label{eq:sigma}
 \end{align}
with initial value $F_0>0$, initial volatility $\sigma_0>0$, exponent $\gamma>0$ and volatility of the volatility $\alpha>0$. Here $W$ and $Z$ are Brownian motions with correlation $\rho \in [-1,1]$. 

 A class of stochastic volatility models of the form 
\begin{align}
&dY_t=\sigma_1v_t^{\alpha}Y_tdB^1_t + \sigma_2v^{\alpha}_tY_tdB_t^2, \quad t \ge 0, \label{eq:S} \\
&dv_t=a_1v_tdB^1_t + a_2v_tdB_t^2 + \kappa(L-v_t)dt, \quad t \ge 0, \quad v_0 = 1,\label{eq:v}
\end{align}
is considered in \cite{sin1998complications}, where $\alpha, \kappa, L \in \mathbb{R}^+$, $\sigma_1, \sigma_2, a_1, a_2 \in \mathbb{R}$ \black{and $(B_1,B_2)$ is a two-dimensional Brownian motion}. It is well known that $Y$ is a strict local martingale if and only if $a_1 \sigma_1 + a_2 \sigma_2 > 0$, see  \cite{sin1998complications}. From now on, we call  \eqref{eq:S}-\eqref{eq:v} the Sin model. It can be noted that, when $\gamma=1$, the SABR model is a particular case of the Sin model, and is a strict local martingale if and only if $\rho >0$. 
 
 The call prices for the SABR model admit an analytical approximation\footnote{We compute these approximations using the library available at \\
 \hyperlink{https://github.com/google/tf-quant-finance/tree/master/tf_quant_finance/models/sabr}{\texttt{https://github.com/google/tf-quant-finance/tree/master/tf$\_$quant$\_$finance/models/sabr}}..}, see for example \cite{piiroinen2018asset}.
 For the Sin model, on the other hand, we are not aware of any way to analytically compute or approximate the prices, so we rely on Monte-Carlo approximations with $200000$ simulations and time step $0.01$.
 
We  take $61$ maturities equally spaced from $T_1=2$ to $T_l=5$ and, as in Section \ref{sec:localvol}, two groups of $100$ equally spaced strikes: one from $K_1=1$ to $K_m=3.5$ and one from $K_1=3$ to $K_m=5.5$. We consider three different classes of processes for the underlying, in all cases with initial value $X_0=2$:
\begin{itemize}
\item A set of displaced CEV models with same specifications as in Section \ref{sec:localvol}, that is, in particular, with randomly varying exponent and displacement;
\item A set of SABR models \eqref{eq:F}-\eqref{eq:sigma} with parameters $\alpha=0.5$, $\gamma=1$ and $\sigma_0=0.5$. We let the correlation $\rho$ between the process and the volatility be randomly chosen and a priori different for any underlying: in particular, we set it to be uniformly distributed in $[-0.8,0]$ in half of the cases (the ones where the underlying is a true martingale) and in $(0, 0.8]$ in the other half of the cases (the ones where the underlying is a strict local martingale).
\item A set of Sin models \eqref{eq:S}-\eqref{eq:v} with parameters $v_0=0.5$, $\alpha=1$, $\kappa=0$, $\sigma_2=-0.5$, $a_1=1.8$ and $a_2=1.2$. We let $\sigma_1$ be randomly chosen and a priori different for any underlying: specifically, it is uniformly distributed in  $[0,1/3]$ in half of the cases (the ones where the underlying is a true martingale) and in $(1/3, 1]$ in the other half of the cases (the ones where the underlying is a strict local martingale).
\end{itemize}
\black{For all the experiments, we consider $50000$ underlyings for the training set and  $5000$ for the test set, and  investigate accuracy and $F_1$ score of our methodology.}

As a benchmark for our following investigation, for which training and test sets will consist of prices generated by underlying coming from different classes of models, Table \ref{tab:accuracysingle} shows the accuracy \black{and the $F_1$ score} in the training and test phases when we both train and test the neural network with any of the three models above. Half of the them are true martingales and half strict local martingales. 
\begin{table}[h]
 \centering
\black{ \begin{tabular}{|c|c|c|c|}
 \hline
&  Displaced CEV & Sin & SABR \\
 \hline
Strikes in $[1,3.5]$ & \makecell{$a(\text{train})=99.8\%$, \\ $a(\text{test})=98.1\%$  \\ $F_1(\text{train})=0.988$ \\ $F_1(\text{test})=0.927$} &\makecell{$a(\text{train})=99.8\%$, \\ $a(\text{test})=99.6\%$  \\ $F_1(\text{train})=0.995$ \\ $F_1(\text{test})=0.934$} & \makecell{$a(\text{train})=99.4\%$, \\ $a(\text{test})=94.8\%$  \\ $F_1(\text{train})=0.997$ \\ $F_1(\text{test})=0.876$}  \\
 \hline
Strikes in $[3,5.5]$& \makecell{$a(\text{train})=99.9\%$ \\ $a(\text{test})=99.2\%$ \\ $F_1(\text{train})=0.991$ \\ $F_1(\text{test})=0.941$} &\makecell{$a(\text{train})=99.8\%$, \\ $a(\text{test})=99.7\%$ \\ $F_1(\text{train})=0.998$ \\ $F_1(\text{test})=0.947$} & \makecell{$a(\text{train})=99.3\%$ \\ $a(\text{test})=98.4\%$ \\ $F_1(\text{train})=0.995$ \\ $F_1(\text{test})=0.914$}  \\
\hline
\end{tabular}}
\caption{\black{Accuracy ($a(\text{train})$ and $a(\text{test})$) and $F_1$ score ($F_1(\text{train})$ and $F_1(\text{test})$) on the training set and on the test set}   when the algorithm to identify bubbles is both trained and tested with one single family of models: the three investigated cases are  displaced CEV, Sin model and SABR model. }
\label{tab:accuracysingle}
\end{table}

\black{We now mix the models for training and testing. Since training and test data are now drawn from very different distributions, in some cases using the same network architecture and learning rate specified in Section \ref{sec:nummethodology} is not optimal, and better results can be obtained with different specifications. In particular, we let the learning rate $\lambda$ vary in $\{0.01, 0.001, 0.0001\}$, and also consider a bigger network constituted by an input layer with $100$ nodes, two hidden layers with $70$ and $30$ nodes, and possibly $L1$ or $L2$-regularization with parameter in the set  $\{0.1, 0.01, 0.001\}$. For some choices of the training and test data, all reported in Tables \ref{tab:accuracytestcev} and \ref{tab:accuracytestsabr}, we find that changing the learning rate or taking the bigger, regularized network lead to bigger accuracy and $F_1$ score in the test set. All the results shown in Tables \ref{tab:accuracytestcev} and \ref{tab:accuracytestsabr} are relative to the best choice of network and learning rate. When the choice of the network and of the learning rate is different from the specifications in Section \ref{sec:nummethodology}, we indicate it. In particular, when we write  ``Big, Reg'' this means that we use the bigger network with $L2$-regularization with parameter $0.01$: this turns out to be the best regularization choice in terms of performance in the test set in all cases when regularization is required. Looking at Tables \ref{tab:accuracytestcev} and \ref{tab:accuracytestsabr}, it can be noted that only two variants of the setting described in Section \ref{sec:nummethodology}  are needed to always get the highest test accuracy and $F_1$ score: the bigger, regularized network and the ``default" network  with smaller learning rate.}

First, to investigate the performance of our method when trained on stochastic volatility processes and tested on local volatility models, we test the algorithm looking at the prices generated by the set of displaced CEV models, for three different training sets: one constituted exclusively from data coming from the SABR model, one only from the Sin model and one from both of them. When both the Sin and the SABR models are used for training, the training set is equally split between the two cases. Table \ref{tab:accuracytestcev} shows the accuracy \black{and the $F_1$ score} we get on the training set and on the test set in the different cases.
%\begin{table}[h]
%\centering
% \black{\begin{tabular}{|c|c|c|c|}
% \hline
%&  Only SABR & Only Sin & Both SABR and Sin \\
% \hline
%Strikes in $[1,3.5]$ & \makecell{$a(\text{train})=97.2\%$ \\ $a(\text{test})=67.2\%$  \\ $F_1(\text{train})=0.953$ \\ $F_1(\text{test})=0.639$} &\makecell{$a(\text{train})=99.6\%$ \\ $a(\text{test})=50.0\%$  \\ $F_1(\text{train})=0.939$ \\ $F_1(\text{test})=0.334$} & \makecell{$a(\text{train})=93.5\%$ \\ $a(\text{test})=75.0\%$  \\ $F_1(\text{train})=0.868$ \\ $F_1(\text{test})=0.655$}  \\
% \hline
%Strikes in $[3,5.5]$& \makecell{$a(\text{train})=99.8\%$ \\ $a(\text{test})=85.8\%$ \\ $F_1(\text{train})=0.997$ \\ $F_1(\text{test})=0.875$ } &\makecell{$a(\text{train})=99.7\%$ \\ $a(\text{test})=78.3\%$  \\ $F_1(\text{train})=0.958$ \\ $F_1(\text{test})=0.759$} & \makecell{$a(\text{train})=98.0\%$ \\ $a(\text{test})=83.2\%$  \\ $F_1(\text{train})=0.934$ \\ $F_1(\text{test})=0.806$}  \\
%\hline
%\end{tabular}}
%\caption{\black{Accuracy ($a(\text{train})$ and $a(\text{test})$) and $F_1$ score ($F_1(\text{train})$ and $F_1(\text{test})$) on the training set and on the test set} when the algorithm to identify bubbles is tested looking at call prices generated by displaced CEV models, for two different sets of strikes and three different training sets.}\label{tab:accuracytestcev}
%\end{table}

\begin{table}[h]
\centering
 \black{\begin{tabular}{|c|c|c|c|}
 \hline
&  Only SABR & Only Sin (Big, Reg) & SABR and Sin (Big, Reg)   \\
 \hline
Strikes in $[1,3.5]$ & \makecell{$a(\text{train})=99.8\%$ \\ $a(\text{test})=68.7\%$  \\ $F_1(\text{train})=0.998$ \\ $F_1(\text{test})=0.761$} & \makecell{$a(\text{train})=99.7\%$ \\ $a(\text{test})=82.8\%$  \\ $F_1(\text{train})=0.997$ \\ $F_1(\text{test})=0.790$}& \makecell{$a(\text{train})=98.7\%$ \\ $a(\text{test})=95.2\%$  \\ $F_1(\text{train})=0.986$ \\ $F_1(\text{test})=0.955$}   \\
 \hline
Strikes in $[3,5.5]$& \makecell{$a(\text{train})=99.9\%$ \\ $a(\text{test})=86.2\%$ \\ $F_1(\text{train})=0.999$ \\ $F_1(\text{test})=0.875$ } &\makecell{$a(\text{train})=99.9\%$ \\ $a(\text{test})=86.3\%$  \\ $F_1(\text{train})=0.999$ \\ $F_1(\text{test})=0.841$} & \makecell{$a(\text{train})=97.3\%$ \\ $a(\text{test})=86.7\%$ \\ $F_1(\text{train})=0.977$ \\ $F_1(\text{test})=0.879$} \\
\hline
\end{tabular}}
\caption{\black{Accuracy ($a(\text{train})$ and $a(\text{test})$) and $F_1$ score ($F_1(\text{train})$ and $F_1(\text{test})$) on the training set and on the test set} when the algorithm to identify bubbles is tested looking at call prices generated by displaced CEV models, for two different sets of strikes and three different training sets.}\label{tab:accuracytestcev}
\end{table}

We see that looking at prices of out-of-the-money options \black{generally} improves the performance, as in Section \ref{sec:localvol}, \black{apart from the case when the network is trained with prices generated by both SABR and Sin  processes}. Moreover, splitting the training set into the two sets of SABR and Sin processes results in an higher accuracy \black{and $F_1$ score} on the test set with respect to both cases, when we train with only one set of processes, at the expenses of the performance on the training set.

We then move to test our methodology looking at the prices generated by the class of SABR models, again for three different training sets: the first two sets consist of call prices generated by the displaced CEV models and by the Sin models, respectively, and the third one of prices generated by both of them. In this way, we complete our investigation: the first case regards training with local volatility and testing with stochastic volatility, the second one concerns both training and testing with stochastic volatility, whereas the third one studies if a combination of training from local and stochastic volatility can be beneficial when looking at a stochastic volatility model. The results are summarized in Table \ref{tab:accuracytestsabr}.
%\begin{table}[h]
%\centering
%\black{ \begin{tabular}{|c|c|c|c|}
% \hline
%&  Only displaced CEV & Only Sin & Both displaced CEV and Sin \\
% \hline
%Strikes in $[1,3.5]$ & \makecell{$a(\text{train})=99.1\%$ \\ $a(\text{test})=50.0\%$  \\ $F_1(\text{train})=0.906$ \\ $F_1(\text{test})=0.365$} &\makecell{$a(\text{train})=99.4\%$ \\ $a(\text{test})=68.0\%$  \\ $F_1(\text{train})=0.954$ \\ $F_1(\text{test})=0.615$} & \makecell{$a(\text{train})=97.3\%$ \\ $a(\text{test})=92.2\%$ \\ $F_1(\text{train})=0.838$ \\ $F_1(\text{test})=0.775$}  \\
% \hline
%Strikes in $[3,5.5]$& \makecell{$a(\text{train})=99.8\%$ \\ $a(\text{test})=63.4\%$  \\ $F_1(\text{train})=0.998$ \\ $F_1(\text{test})=0.423$} &\makecell{$a(\text{train})=99.8\%$ \\ $a(\text{test})=91.3\%$  \\ $F_1(\text{train})=0.998$ \\ $F_1(\text{test})=0.923$} & \makecell{$a(\text{train})=99.7\%$ \\ $a(\text{test})=96.8\%$ \\ $F_1(\text{train})=0.997$ \\ $F_1(\text{test})=0.952$}  \\
%\hline
%\end{tabular}}
%\caption{\black{Accuracy ($a(\text{train})$ and $a(\text{test})$) and $F_1$ score ($F_1(\text{train})$ and $F_1(\text{test})$) on the training set and on the test set}  when the algorithm to identify bubbles is tested looking at data coming from the SABR model, for two different sets of strikes and three different training sets.}\label{tab:accuracytestsabr}
%\end{table}

\begin{table}[h]
\centering
\black{ \begin{tabular}{|c|c|c|c|}
 \hline
&  Only disp. CEV (Big, reg) & Only Sin & Disp. CEV and Sin ($\lambda = 0.0001$) \\
 \hline
 Strikes in $[1,3.5]$& \makecell{$a(\text{train})=99.8\%$ \\ $a(\text{test})=50.0\%$  \\ $F_1(\text{train})=0.998$ \\ $F_1(\text{test})=0$} &\makecell{$a(\text{train})=99.9\%$ \\ $a(\text{test})=73.0\%$  \\ $F_1(\text{train})=0.999$ \\ $F_1(\text{test})=0.787$} & \makecell{$a(\text{train})=99.8\%$ \\ $a(\text{test})=88.5\%$ \\ $F_1(\text{train})=0.997$ \\ $F_1(\text{test})=0.898$}  \\
 \hline
 Strikes in $[3,5.5]$ & \makecell{$a(\text{train})=98.1\%$ \\ $a(\text{test})=79.2\%$  \\ $F_1(\text{train})=0.982$ \\ $F_1(\text{test})=0.736$} &\makecell{$a(\text{train})=99.4\%$ \\ $a(\text{test})=93.4\%$  \\ $F_1(\text{train})=0.997$ \\ $F_1(\text{test})=0.930$} & \makecell{$a(\text{train})=99.7\%$ \\ $a(\text{test})=96.0\%$ \\ $F_1(\text{train})=0.997$ \\ $F_1(\text{test})=0.959$}  \\
\hline
\end{tabular}}
\caption{\black{Accuracy ($a(\text{train})$ and $a(\text{test})$) and $F_1$ score ($F_1(\text{train})$ and $F_1(\text{test})$) on the training set and on the test set}  when the algorithm to identify bubbles is tested looking at data coming from the SABR model, for two different sets of strikes and three different training sets.}\label{tab:accuracytestsabr}
\end{table}

Looking at bigger strikes  we get bigger accuracy \black{and $F_1$-score in the test set}  in all cases. % \black{except for the test set when we train only with the displaced CEV model}. 
Moreover, for both the groups of strikes, training with a combination of data coming from a stochastic volatility model and from a local volatility model increases \black{both accuracy and $F_1$-score} in the test set for the SABR model at the expenses of the accuracy in the training set: this suggests again that training with higher variety of data makes our methodology  more robust. On the other hand, \black{for the at-the-money-strikes}, training only with data coming from the displaced CEV processes does not help guessing the strict local martingale property of the SABR processes: in particular, the algorithm always identifies all the processes as \black{true} martingales. Comparing this case with Table \ref{tab:accuracytestcev} \black{for this choice of strikes} we see that our methodology works fairly good if we train with the SABR models and try to label the CEV models, but not vice-versa.

\black{In the Appendix we show the Receiver Operating Characteristic (ROC) curves for all the cases taken into consideration in Tables \ref{tab:accuracytestcev} and \ref{tab:accuracytestsabr}. ROC curves plot the true positive rate (TPR) against the false positive rate (FPR) at various values of the acceptance threshold, see for example \cite{carter2016roc}, \cite{fan2006understanding}, \cite{gonccalves2014roc} for a deeper analysis. In particular, Figures \ref{fig:rocat} and \ref{fig:rocout} display the ROC curves for the at-the-money and the out-of-the-money strikes, respectively.}

\black{We note that, when we train with CEV and test with SABR for at-the-money strikes, the ROC curve is just the diagonal line: this is due to the fact that the network always \blue{predicts} true martingales in the test set. }

\black{Moreover, in seven cases, the ROC curve is (very close to) the combination of the vertical line at zero and the upper line in the plot. For these examples, the network exhibits two distinct patterns:
\begin{enumerate}[label=\alph*)]
\item  It consistently identifies strict local martingales correctly: for any strict local martingale in the test set, it outputs a probability close to $1$ that it is a strict local martingale. This implies that $\text{TPR}=1$ for all thresholds smaller than a value proximal to $1$. In particular, the thresholds for which $\text{TPR} <1$ are big enough so that $\text{FPR}=0$. 
\item  It consistently identifies true martingales correctly: for any true martingale in the test set, it outputs a probability close to $0$ that it is a strict local martingale. This implies that $\text{FPR}=0$  for all thresholds bigger than a value close to $0$. In particular, the thresholds for which $\text{FPR}>0$ are small enough so that $\text{TPR}=1$ .
\end{enumerate}
In both the scenarios above, for all thresholds we have either $\text{TPR}=1$ or $\text{FPR}=0$, and this explains the observed ROC curves. In particular, scenario a) occurs when:
\begin{itemize}
\item we train with CEV and Sin processes and test with SABR (for both groups of strikes);
\item we train with Sin and test with SABR (for out-of-the-money strikes);
\item we train with SABR and Sin and test with CEV (for at-the-money strikes).
\end{itemize}
Scenario b) instead is obtained when:
\begin{itemize}
\item we train with CEV and test with SABR (for out-of-the-money strikes);
\item we train with Sin and test with CEV (for both groups of strikes).
\end{itemize}}

%
%So: . And this explains these strange ROC curves you observe, which are (or are very close to) the y axes and the upper line in the plot: the tpr rate is 1 for all thresholds for which the fpr is > 1. 
%

\black{Finally, in order to show the role that epochs play during training, we investigate the evolution of the accuracy in the set constituted by the prices generated by SABR and Sin processes, with same parameters specified before, as a function of the epoch number. In this case, a fraction of the training data is used as validation data. The model sets apart this fraction of the training data, does not train on it, and evaluates the accuracy on this data at the end of each epoch. In Figure \ref{epochs} we show the training set accuracy (that is, the accuracy on the data that are actually used to train the model) and validation set accuracy letting the validation set size be $10\%$ of the total training set size. }

\black{\begin{figure}
\centering
 \includegraphics[scale=0.45]{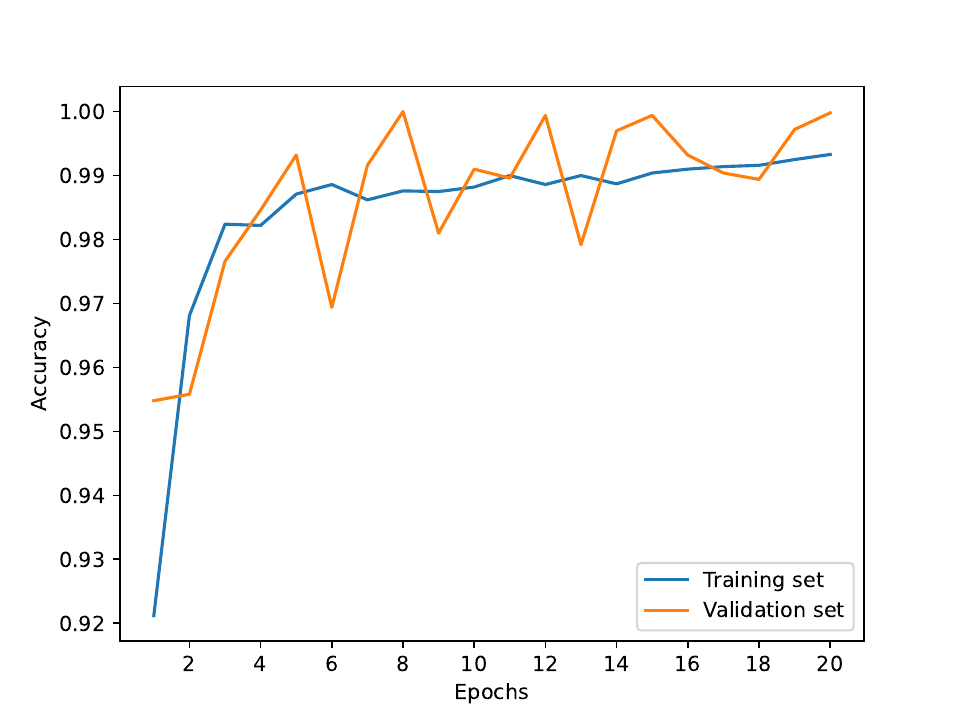}
 \caption{Training set accuracy and validation set accuracy with respect to epochs, when training with SABR and Sin models. The validation set size is $10\%$ of the total training set size.}
 \label{epochs}
\end{figure}}

\black{Summarizing}, the above results show that, in particular when combining different classes of processes in the training phase, our methodology still works when including stochastic volatility models, even if the results are (not surprisingly) less accurate than when we consider local volatility models for both training and testing.

\subsection{The choice of the network architecture}\label{sec:netarch}
We conclude Section \ref{sec:numexperiments} by commenting on the chosen neural network architecture. The network described in Section \ref{sec:nummethodology} (which we call network $A$ in the remainder of the current subsection) has been selected since it guarantees good performances for most of cases, for a fairly small amount of layers and nodes.

In order to test the sensitivity of our approach to the choice of the network architecture, we also consider other possible choices, \blue{initially keeping other hyperparameters such as learning rate and number of epochs unchanged. In particular, we introduce}:
\begin{itemize}
\item Network $B$: same number of layers as $A$  but more nodes:  input layer with $100$ nodes, two hidden layers with $70$ and $30$ nodes;
\item Network $C$:  same number of nodes and layers as $B$, but with $L2$ regularization with parameter $0.01$;
\item Network $D$:  higher number of layers and more nodes than $A$:  input layer with $200$ nodes, four hidden layers with $150$, $100$, $70$ and $30$ nodes.  
\end{itemize}
We then repeat the experiments of Sections \ref{sec:localvol} and \ref{sec:stochvol} for any of these networks. We note that:
\begin{itemize}
\item \blue{The average training accuracy is above $99.5\%$ for all networks and all cases;}
\item Networks $B$ and $D$ never give (strictly) the best results in terms of either accuracy or $F_1$ score in the test set. In particular, choosing $D$ or $B$ decreases the performance in the test set when we train with Sin and test with SABR, train with  Sin and CEV and test with SABR, train with Sin and test with CEV, train with Sin  and SABR and test with CEV.  In all other cases (among which the local volatility models of Section \ref{sec:localvol}) accuracies and $F_1$ score in the testing set remain unchanged. 
\item The regularized network $C$ is (strictly) the best choice in terms of performance in the test set when we train with CEV and test with SABR, when we train with Sin and SABR and test with CEV and when we train with Sin and test with CEV (see also Tables \ref{tab:accuracytestcev} and \ref{tab:accuracytestsabr}); in particular, in the first case and for out-of-the-money strikes, it represents a big improvement with respect to the non regularized network $B$, which gives a test accuracy of only $54\%$.
For the examples of Section  \ref{sec:localvol}, instead, regularization yields worse results in terms of both  \blue{test} accuracy and $F_1$ score.
\item Network $A$ strictly gives the best results in the test set when we train with Sin and test with SABR, train with Sin and CEV and test with SABR, and when we train with SABR and test with CEV.
\end{itemize}
\blue{For network $D$, which has the highest number of parameters to optimize, we also evaluate whether increasing the number of epochs from 20 to 40 or 60 improves the average testing accuracy. We observe that training with more epochs significantly enhances performance only when training with Sin and testing with displaced CEV for at-the-money strikes. In this scenario, the average test accuracy rises from $80.2\%$ at 20 epochs to $84.1\%$ at 40 epochs. This latter value exceeds the accuracy reported for network C in Table \ref{tab:accuracytestcev}. However, since the results are similar and network D with $40$ epochs is computationally more costly, we retain network C for consistency.}

\section{Analysis on market data: detecting tech stocks bubbles}\label{sec:marketdata}
\subsection{\black{The methodology}}
We now use our \black{approach} to detect if some tech stocks have been affected by an asset price bubble in last years. In particular, we focus on \black{Meta Platforms Inc.\ (META), GameStop Corp.\ (GME),} Nvidia Corporation (NVDA) and Tesla Inc.\ (TSLA). These companies have experienced a boom in the last years, followed by a contraction at the beginning of 2022, as it can also be \black{seen by the time series of prices shown in Figures \ref{nvidia}, \ref{gamestop}, \ref{meta} and} \ref{tesla}. This has brought many financial analysts to claim the presence of a new tech bubble, after the \textit{dot com mania} of the late 1990s (see for example \cite{Bercovici},  \cite{fusari2020testing}, \cite{Ozimek}, \cite{Seria}, \cite{Sharma}). 

Let $S^i_{t_i}$ be the value at time $t_i $ of a stock $i \in \{\black{\text{META}, \text{ GME},} \text{ NVDA},  \text{ TSLA}\}$. In order to assess if a bubble is present at $t_i$, we consider the call option market prices $C^i_1, \dots, C^i_{n_i}$ at $t_i$ for maturities $T^i_1, \dots, T^i_{n_i}$ and strikes $K^i_1, \dots, K^i_{n_i}$.  \black{We use call option prices for a total of $50000$ different underlyings} for  maturities $T^i_1, \dots, T^i_{n_i}$,  strikes $K_1^i, \dots, K^i_{n_i}$ and initial value of the underlying $S^i$ \black{to train a neural network  which has a first layer with $60$ nodes, two hidden layers with $30$ and $7$ nodes, respectively, ReLu activation function for the input and hidden layers, Sigmoid activation function for the output layer and $L2$-regularization with parameter $0.01$. For training we use the Adam optimization algorithm with learning rate equal to $0.001$ and $20$ epochs.}   

 \black{As potential candidates to generate the prices used to train the network, we only consider displaced CEV models in \eqref{eq:CEV} and SABR models in \eqref{eq:F}-\eqref{eq:sigma} with $\gamma=1$, as they admit analytic formulas for the call prices. In particular:}
\begin{itemize}
\item For the SABR models, we let $\sigma_0$ and $\alpha$ be both uniformly distributed in $[0.1, 0.9]$. Moreover, we let $\rho$ be uniformly distributed in $[-0.9,0]$ in half cases (those where the underlying is a true martingale) and in $(0, 0.9]$ in the other half of cases (those where the underlying is a strict local martingale).
\item For the displaced CEV models, we let the volatility $\sigma$ and the displacement $d$ be uniformly distributed in $[0, 5]$ and $[0.8, 1.5]$, respectively. Moreover, we let the exponent $\beta$ be uniformly distributed in $[0.5,1]$ in half cases (those where the underlying is a true martingale) and in $(1,1.2]$ in the other half of cases (those where the underlying is a strict local martingale). 
\end{itemize}
Note that in the models specification above we take randomly chosen parameters in order to increase the variety of the training data and then the robustness of our methodology. 

% the highes. In most cases, the chosen underlyings are the displaced CEV models: contrary to the examples showed in Section \ref{sec:stochvol}, where the combination of local and stochastic volatility models enhances the accuracy on the test set, here for fewer and less deeply out-of-the-money strikes and maturities we observe a better performance for local volatility models alone. The network trained with the displaced CEV processes gives an accuracy of more than $90\%$ for most cases also on a training set constituted by calls with SABR underlyings}.

\black{After the training phase, we feed the network with the realized call market prices $C^i_1, \dots, C^i_{n_i}$, so that it outputs a number in $(0,1)$ which can be interpreted as the probability that the underlying has a bubble based on the training and on the market prices. We repeat this procedure $10$ times, and take the average of the probabilities returned by the networks.}

\black{For any date and any stock we decide whether to use  the network trained with data from displaced CEV models only, SABR models only or both, based on the following test. We train the network  for each of these alternatives, and for the maturities, strikes and initial values observed in the market. After running the prediction $10$ times with same training data, we select the training choice (displaced CEV only, SABR only or both) which ranks first in most of the metrics below: 
\begin{itemize}
\item Lowest variance of the output probabilities on the observed market data;
\item Highest average accuracy  on the training set;
\item Highest average accuracy on a test set constituted by option prices generated by both displaced CEV and SABR models;
\item Lowest variance of the output probabilities on the test set of displaced CEV and SABR models.
\end{itemize}
If two alternatives rank first in two of the above metrics, we take the average prediction between them. \\ According to this criterion we select $6$ times both displaced CEV and SABR models, $5$ times only SABR models and $15$ times only displaced CEV models. Moreover, in $7$ cases we take the average predictions between the network trained with both displaced CEV and SABR models and the one trained with  displaced CEV models only. In more detail:
\begin{itemize}
\item The network trained with both displaced CEV and SABR models gives  the lowest variance of returned probabilities for the market data in $14$ cases,  the highest accuracy on the training set in $1$ case,  the highest accuracy on the test set in $31$ cases and the lowest variance  of returned probabilities for the underlyings in the test set in $2$ cases;
\item The network trained with only SABR models gives  the lowest  variance of returned probabilities for the market underlying in $6$ cases and  the highest accuracy on the training set in $9$ cases;
\item The network trained with only displaced CEV models gives  the lowest variance of returned probabilities for the market underlying in $12$ cases,  the highest accuracy on the training set in $21$ cases,  the highest accuracy on the CEV/SABR test set in $2$ cases and the lowest variance  of returned probabilities for the underlyings in the test set in $30$ cases.
\end{itemize}
Note that, even if the training set with both displaced CEV and SABR models provides in most cases the highest accuracy on the test set, still training with the displaced CEV alone generally provides good test accuracies for the strikes and maturities available in the market. Moreover, for same strikes and maturities, it scores a lot better (with test accuracies around $97\%$) when the test set is generated by only displaced CEV models.
}

We now state our findings for the \black{four} underlyings. \black{Note that only American call option prices are available on the market. However, all the stocks that we consider here do not pay dividends or have a very small dividend yield (Nvidia Corporation) that we neglect. In this case, American call option prices are equal to European option prices and can be used for our tests.} 

\black{\subsection{Nvidia Corporation}}

The advanced semiconductor manufacturer Nvidia Corp. has seen its shares rocketing to huge heights in last years. From January 2016 to its peak in late November 2021, Nvidia's closing stock prices boosted their value by about $4600\%$. This seems to point to a bubble in Nvidia stock, see for example \cite{kolakowski}. 

We apply our methodology in order to check for the presence of a bubble from January 2016 to July 2022, on a six-monthly basis. In \black{Figure \ref{nvidia}} we plot the price evolution of the asset with the dates when we use our methodology to detect if a bubble is present.  In particular, for each date \black{we consider the average} probability $P_b$ for the presence of a bubble \black{we get running our methodology $10$ times, as described above}.  Different colors of the lines correspond to different values of $P_b$. We list these findings in more details in the following, in terms of the evolution of $P_b$. 

\begin{itemize}
\item No bubble is detected from January 2016 to \black{July 2017: in particular, $P_b$ is equal to $4.8 \%$ in January 2016,  $0.6 \%$ in July 2016, $8.7 \%$ in  January 2017 and $1.5\%$ in July 2017}. 

\item \black{The network outputs a probability of a bubble $P_b=71.2\%$ in January 2018  and $P_b=41.4\%$ in July 2018}.

\item \black{In the second half of 2018 we observe a decrease of the asset price, and the network outputs bubble probabilities $P_b=19.8\%$ in January 2019, $P_b=20.3\%$ in July 2019 and $P_b=41.2\%$ in January 2020.}

\item A second increase of the price leads to a second bubble according to our methodology. Indeed, the algorithm outputs \black{high probabilities of having a bubble from July 2020: we have $P_b=98.9\%$ in July 2020, $P_b=99.5\%$ in January and July 2021, $P_b=95.2\%$ slightly before the peak of November 2021}.

\item A second decline of the price can be observed from the end of 2021:  after the beginning of this decline, our methodology outputs \black{$P_b=44.3\%$}.

\end{itemize}

%We apply the algorithm described above to check if this is the case or not. 

%We consider four dates: 1st of June 2020, 1st of November 2021, 3rd of January 2022, 1st of February 2022. Note that the first two dates are before and the last two after the peak of the stock price evolution, see Figure \ref{nvidia}. Our methodology attributes a probability $100\%$ that a bubble is present in the first two dates and probabilities  $99.31\%$ and $99.98\%$, respectively, that no bubble is present the 3rd of January and the 1st of February. This is in line with the observed price trend, interpreting the contraction after the peak as the burst of the bubble. 

\black{\subsection{GameStop Corporation}}

\black{The stock value of the video games and gaming merchandise retailer GameStop has experienced a meteoric rise over the few last years. From the beginning of January 2021, GameStop's stock price increased from around $17\$$ to an all-time high of nearly $483\$$ on January 28, 2021, marking an astonishing gain of over $2700\%$ in a matter of weeks. After this surge, the price experienced a rapid decline, to rise again and stabilize, albeit with high volatility, at levels much higher than those of early 2021. Observing this behaviour, most financial analysts have claimed the stock clearly had a bubble, see for example \cite{libich2021bitcoin}. We here assess this statement using our methodology. Figure \ref{gamestop} shows the results. The network detects a bubble with high probability already in January and September 2020. In the middle of the huge rise of January 2021 a bubble is detected with average probability $P_b=86.7\%$, and later on always with probabilities higher than $99\%$ apart from June 2021 when the network outputs $P_b=94.7\%$. }

\black{\subsection{Meta Platforms Inc.}}
\black{The stock price of Meta Platforms, Inc., formerly known as Facebook, has experienced a substantial surge over the last few years, more than doubling from January 2018 to July 2021. We use our methodology in order to assess if this rise indicates the presence of a bubble, see Figure \ref{meta}. Before 2020, the network outputs small probabilities for Meta to have a bubble (average probabilities are $0.2\%$ in November 2018, $0.3\%$ in March 2019, $5\%$ in September 2019). Things change after the substantial rise from early 2020: $P_b$ equals $82.1\%$ in late July 2020, $92.8\%$ in early April 2021 and $93.4\%$ in September 2021. Finally, the decrease after the peak is interpreted as the burst of the bubble, as $P_b=0.7\%$ in November 2022. Note that the presence of temporary bubbles on Facebook over last years has been detected also with the method of  \cite{fusari2020testing}.}

\black{\subsection{Tesla Inc.}}
As a third example, we focus on the case of Tesla. The Tesla stock price has seen an extremely steep increase in last years, of about $30000\%$ from early 2016 to the peak in late 2021. Because of such a huge boost, many experts claimed that the stock is affected by a bubble, see for example \cite{libich2021bitcoin}. By using our methodology we  assess if a bubble is present or not four dates, which are shown in Figure \ref{tesla}. The color of the lines which identify the dates corresponds to the probability $P_b$ assigned by our methodology for the presence of a bubble. The algorithm outputs \black{$P_b=1.3\%$ the 3rd of January 2016},  \black{$P_b=70.1\%$ the 1st of November 2021 (right before the peak of the stock price), $P_b=95\%$ the 1st of February 2022 (right after the peak) and $P_b=98.9\%$ the 17th of July 2022}.

\black{
\subsection{Discussion}}
\black{Overall, for each of these assets our method finds that bubbles were present with high probability at certain points in time. These dates coincide with points shortly before a rapid decrease of the asset price. Similarly, points in time at which the method outputs low bubble probabilities coincide with times which are followed neither by a rapid increase nor a rapid decrease of the price.}

\begin{figure}
\centering
 \includegraphics[scale=0.95]{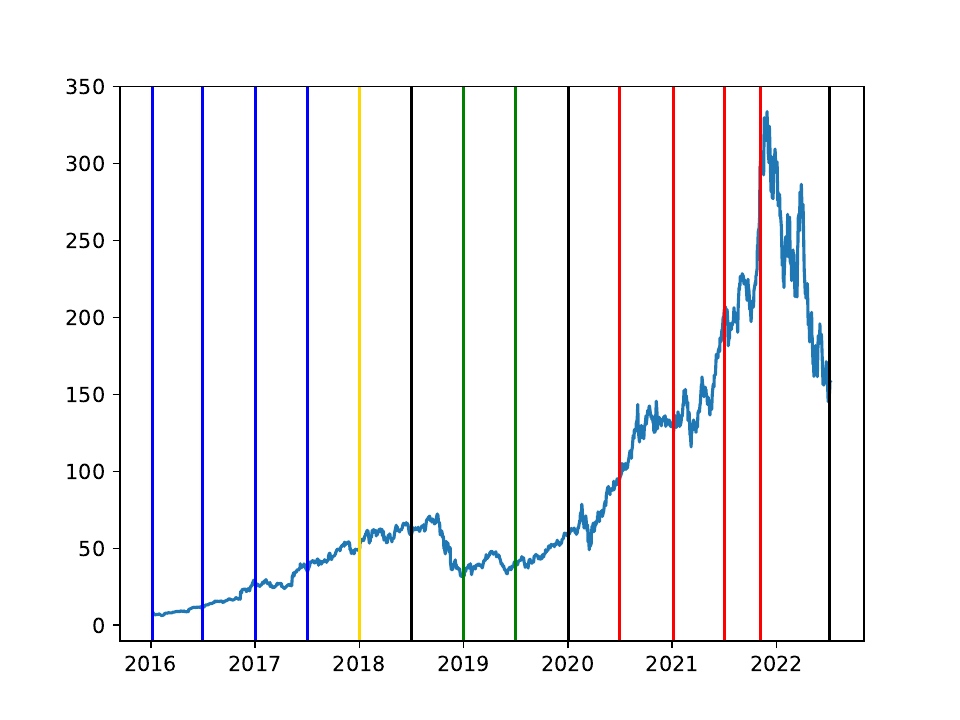}
 \caption{Nvidia Corporation stock prices (USD). The lines illustrate the dates considered to detect a bubble with our methodology. The colors change based on the \black{average} probability $P_b$ that the asset has a bubble according to our methodology. In particular: for the red lines we have $P_b>95\%$, for the \black{yellow} line \black{$P_b = 71.2\%$}, for the black lines \black{$40\% < P_b < 45\%$},  for the green lines $19\% < P_b < 21\%$ and for the blue lines $P_b \black{<10\%.}$}
 \label{nvidia}
\end{figure}
\begin{figure}
\centering
 \includegraphics[scale=0.7]{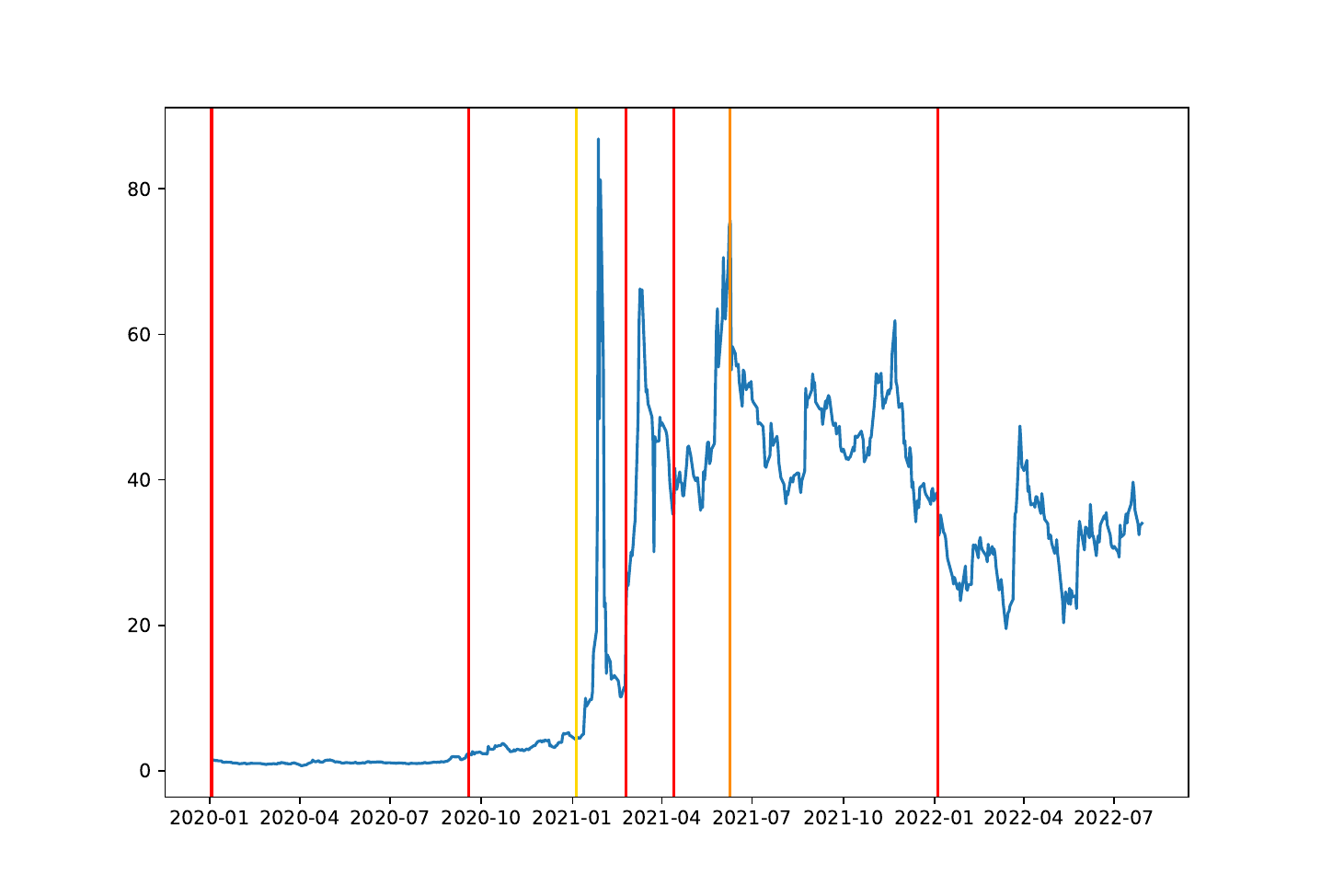}
 \caption{\black{GameStop corporation. stock prices (USD). The lines illustrate the dates considered to detect a bubble with our methodology. The colors change based on the average probability $P_b$ that the asset has a bubble according to our methodology. In particular, for the red lines we have \black{$P_b>98\%$}, for the  orange line $P_b=94.7\%$ and for the yellow one $P_b=86.7\%$.}}
 \label{gamestop}
\end{figure}
\begin{figure}
\centering
 \includegraphics[scale=0.75]{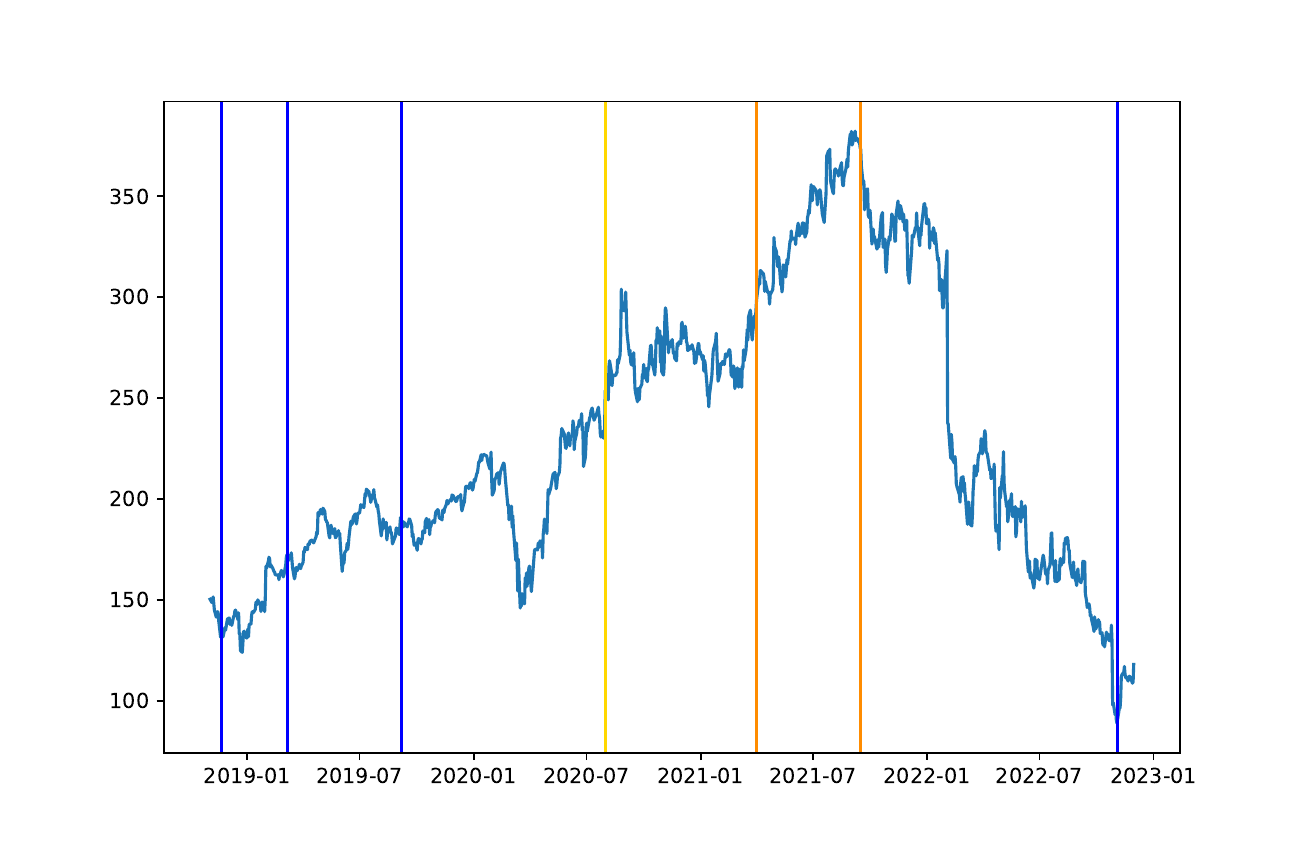}
 \caption{\black{Meta Platforms Inc. stock prices (USD). The lines illustrate the dates considered to detect a bubble with our methodology. The colors change based on the average probability $P_b$ that the asset has a bubble according to our methodology. In particular, for the orange lines we have \black{$92<P_b<94\%$}, for the  yellow line $P_b=82.1\%$ and for the blue ones $P_b<=5\%$.}}
 \label{meta}
\end{figure}
\begin{figure}
\centering
 \includegraphics[scale=0.9]{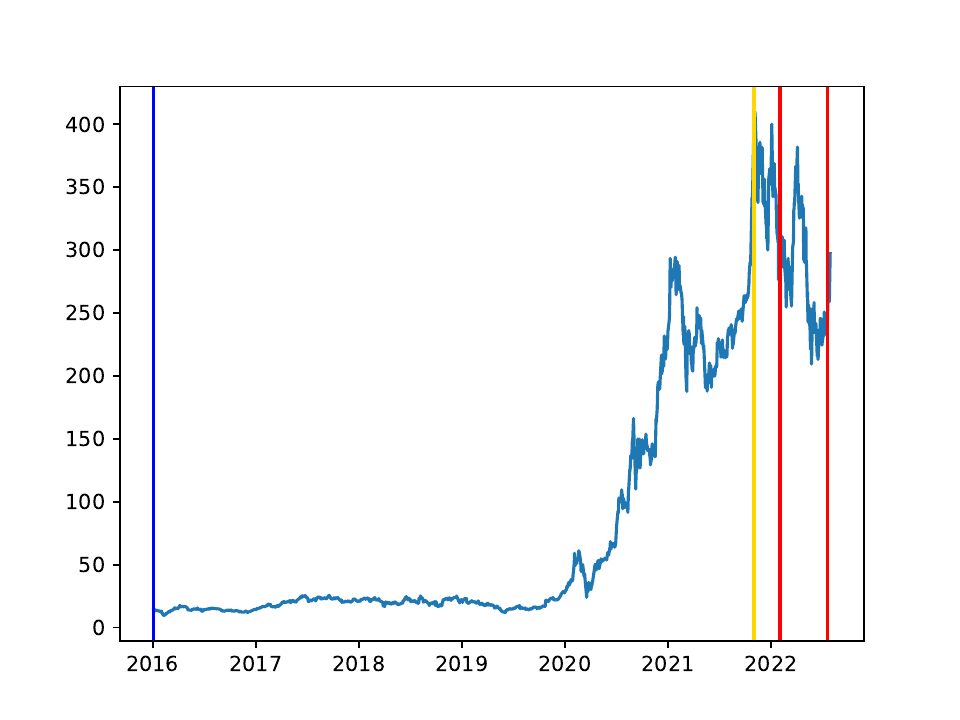}
 \caption{Tesla Inc. stock prices (USD). The lines illustrate the dates considered to detect a bubble with our methodology. The colors change based on the \black{average} probability $P_b$ that the asset has a bubble according to our methodology. In particular: \black{for the red lines we have $P_b\ge95\%$, for the yellow line $P_b=70.1\%$ and for the blue one $P_b=1.3\%$}.}
 \label{tesla}
\end{figure}

\section{Conclusion}\label{sec:conclusions}

In this article we propose a deep learning-based methodology to detect financial asset price bubbles, where %The methodology can be summarized as follows. 
a deep neural network is trained using synthetic option price data generated from a collection of models. \black{We apply our method to} local and stochastic volatility models, in which the presence of a bubble is determined by the value of certain parameters. This trained neural network is then used for bubble detection in more general situations. 
The deep neural network takes as input call option prices for a given underlying and returns as output a number in \blue{$(0,1)$}, which indicates the probability that the underlying exhibits a bubble. 

We have assessed the performance of the methodology on both synthetic and real data. First we have tested various settings in which the network is trained using a certain class of local or stochastic volatility models and tested within another class. In particular, a high out-of-sample prediction accuracy has been observed when two model classes are used for training and the trained network is used for bubble detection in another class of models. We have also tested
the methodology with market data of \black{four} tech stocks at various dates and \black{have observed a close match between output of the network and expected results}.
Furthermore, these numerical experiments have been complemented by a mathematical proof for the method in a local volatility setting. Extensions of these theoretical results to more general model classes are also possible, provided an analogue of the ``bubble detection function'' $F$ can be constructed. 
On the other hand, the proposed algorithm could be directly modified to take into account additional market factors as inputs for the neural network, provided that these factors can be incorporated into the models used to generate the training data.

\bibliography{bib}
\bibliographystyle{plainnat}

\section{Appendix}
We here provide the ROC curves relative to the experiments of Section \ref{sec:stochvol}.

\black{\begin{figure}[h]
\subfloat{\includegraphics[width = 2.5in]{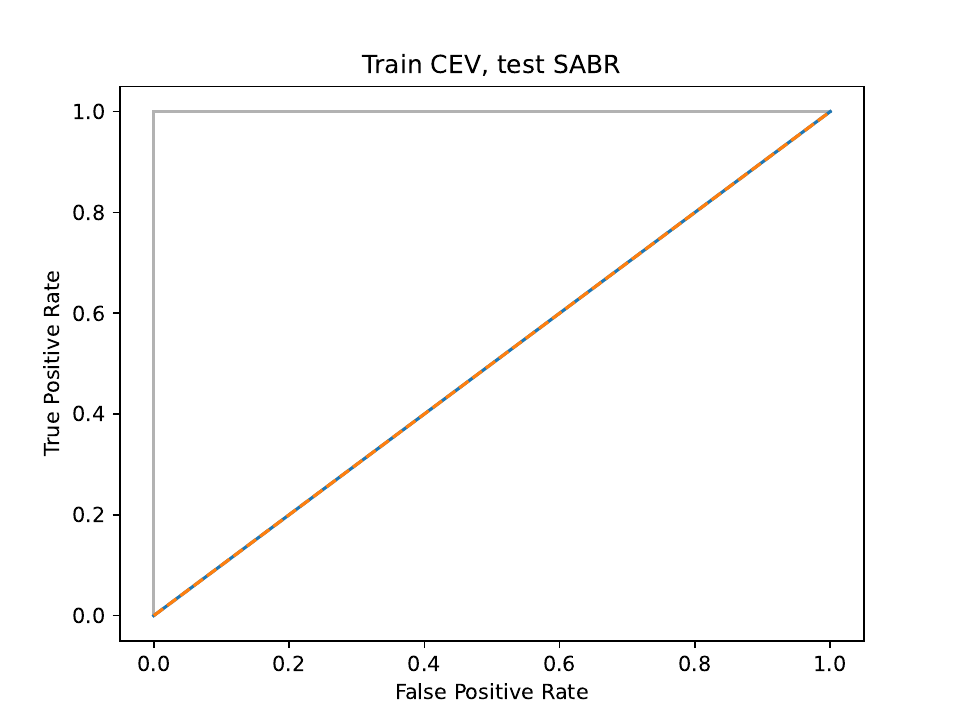}}
\subfloat{\includegraphics[width = 2.5in]{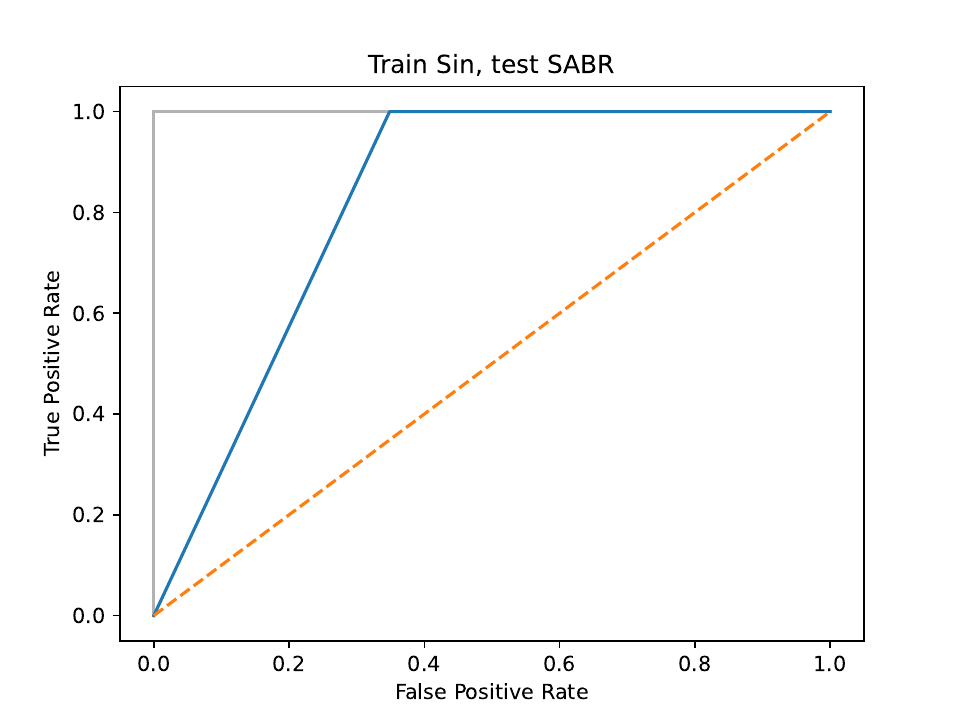}}\\
\subfloat{\includegraphics[width = 2.5in]{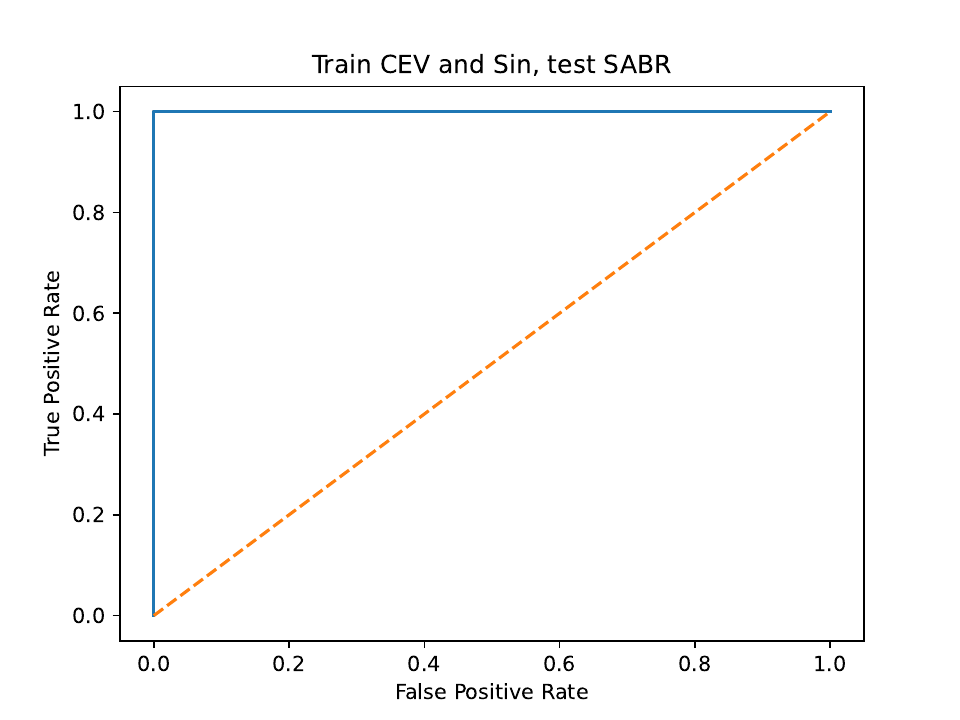}}
\subfloat{\includegraphics[width = 2.5in]{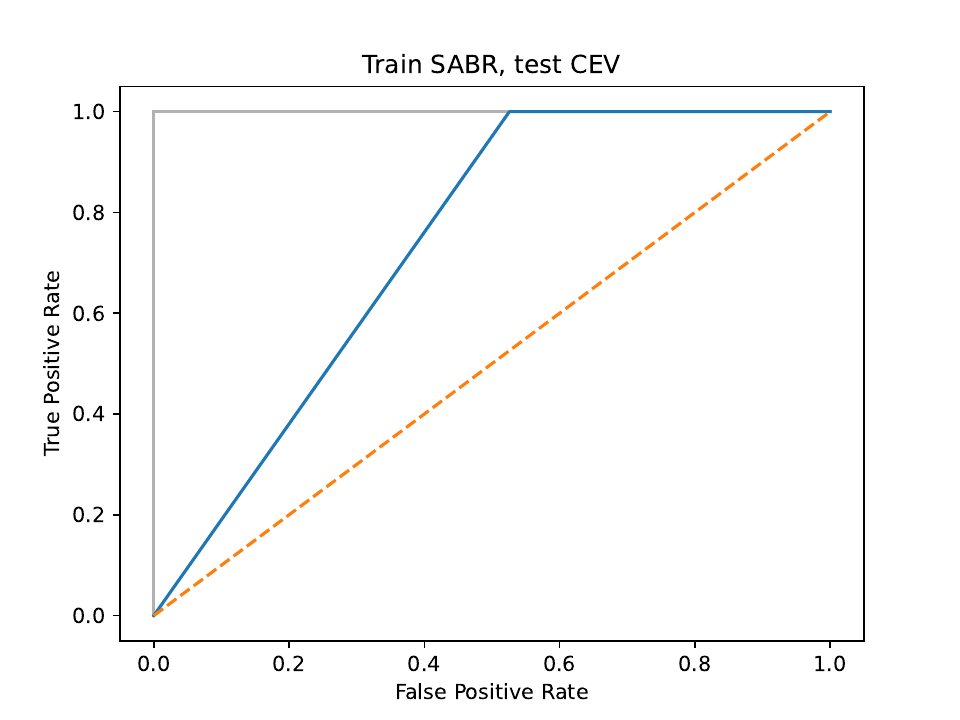}} \\
\subfloat{\includegraphics[width = 2.5in]{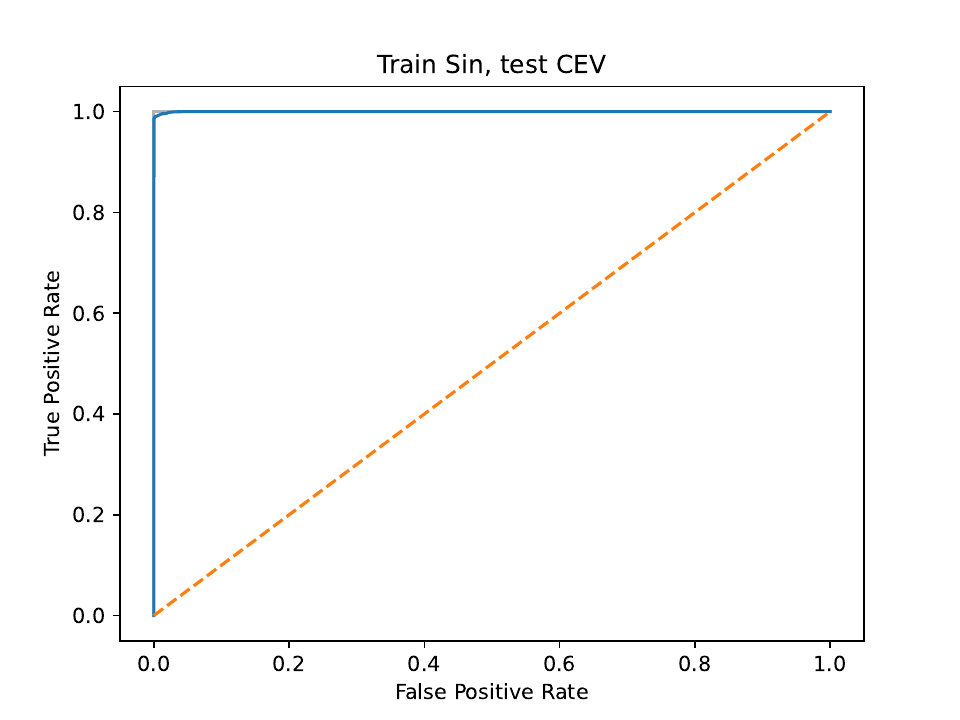}}
\subfloat{\includegraphics[width = 2.5in]{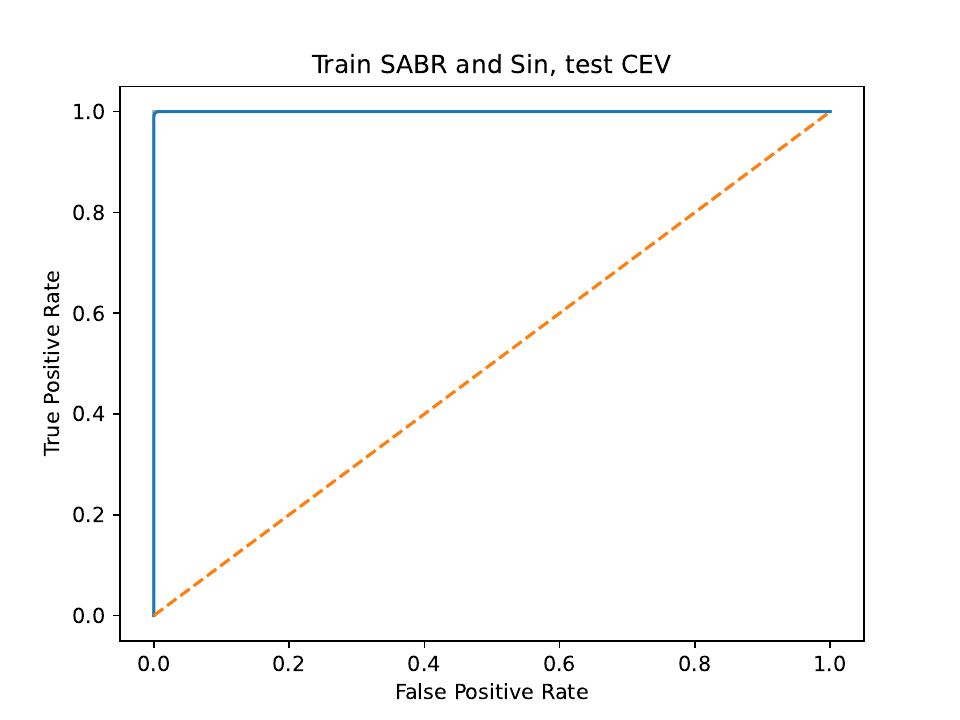}} 
\caption{ROC curves obtained taking underlyings with initial value $x_0=2$ and dynamics specified in Section \ref{sec:stochvol}, equally spaced maturities in $(2,5)$ and equally spaced strikes in $(1,3.5)$. }
\label{fig:rocat}
\end{figure}
\begin{figure}
\subfloat{\includegraphics[width = 2.5in]{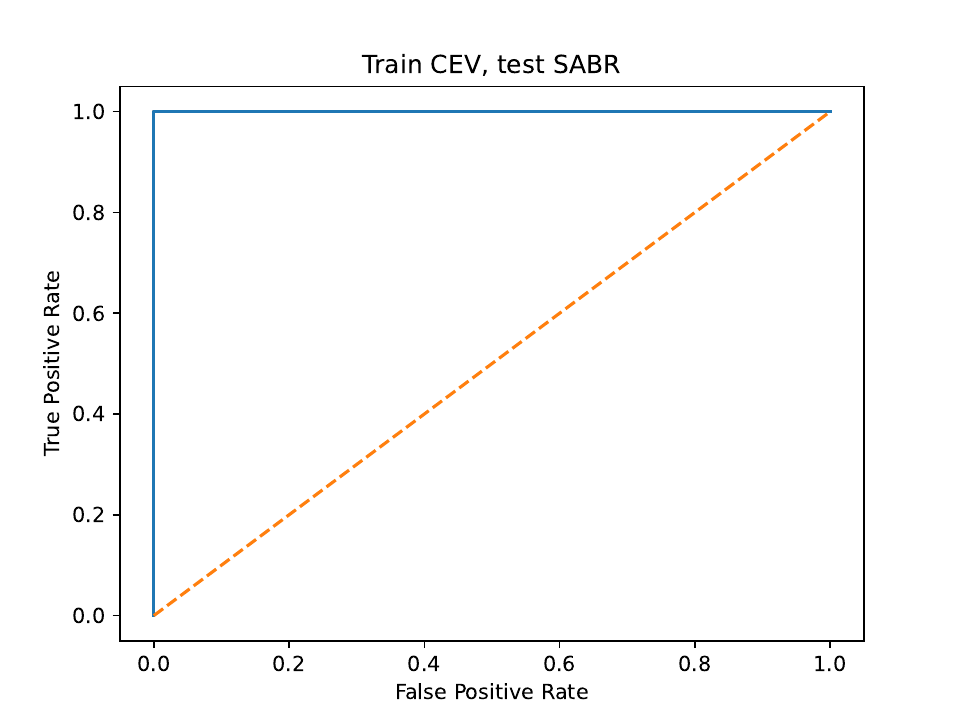}}
\subfloat{\includegraphics[width = 2.5in]{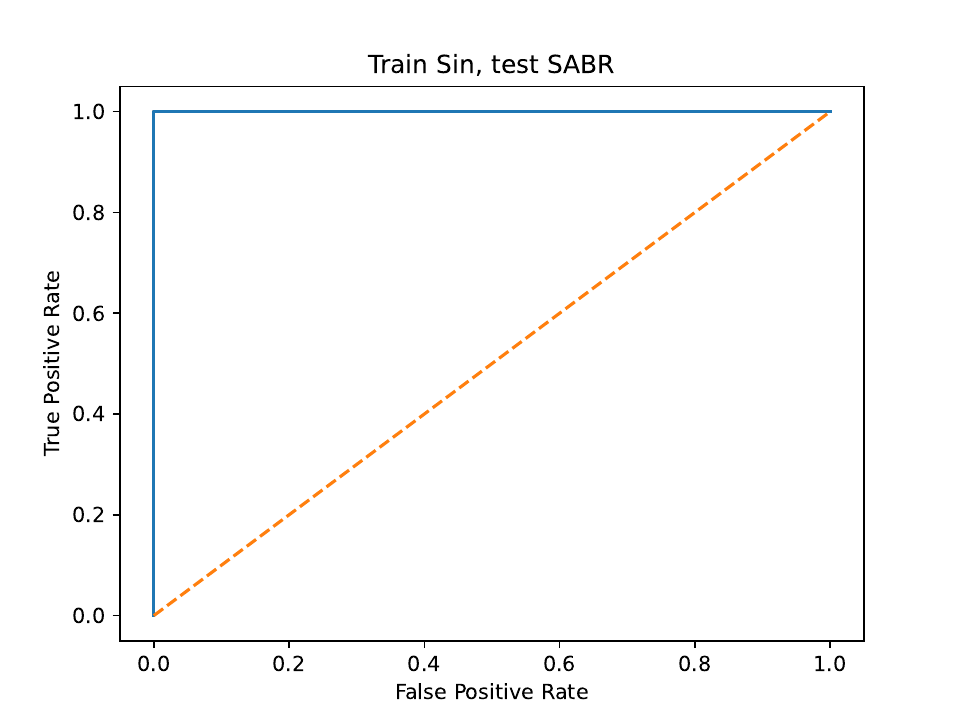}}\\
\subfloat{\includegraphics[width = 2.5in]{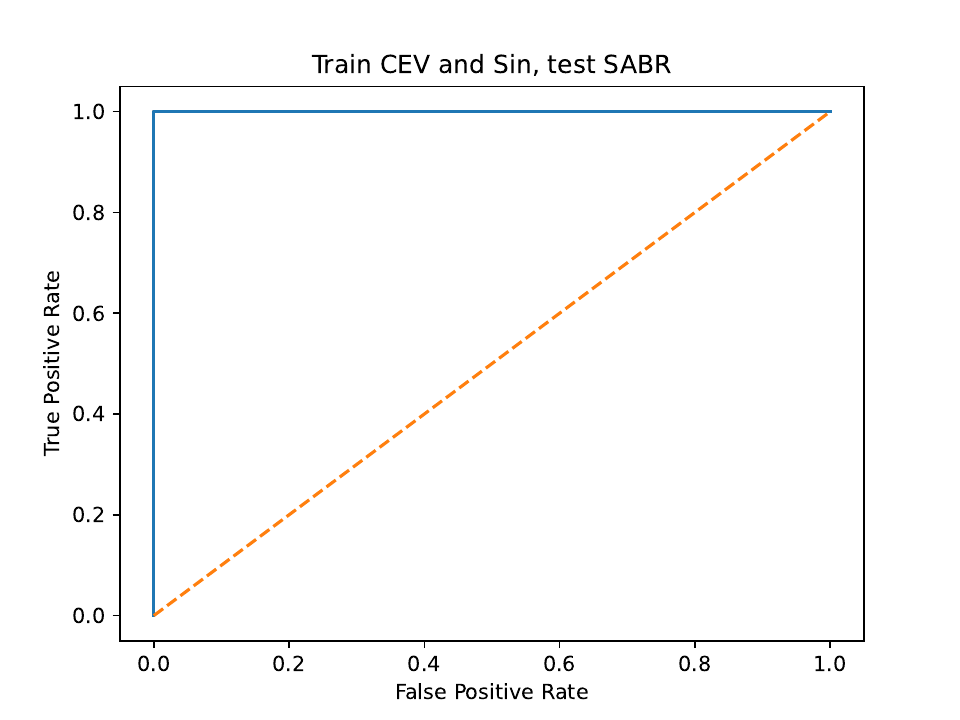}}
\subfloat{\includegraphics[width = 2.5in]{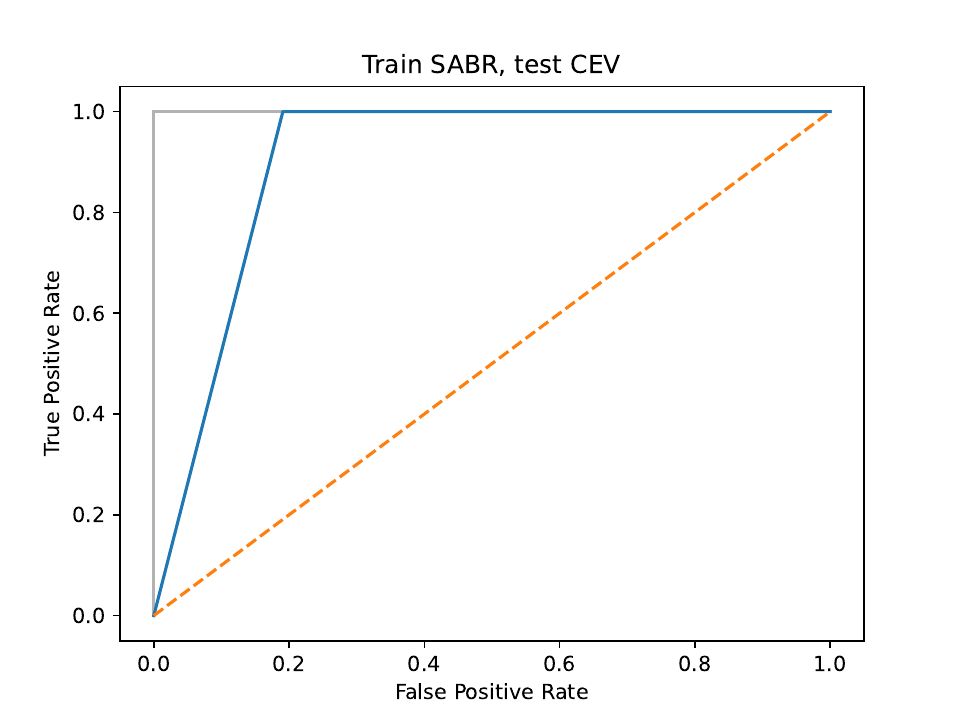}} \\
\subfloat{\includegraphics[width = 2.5in]{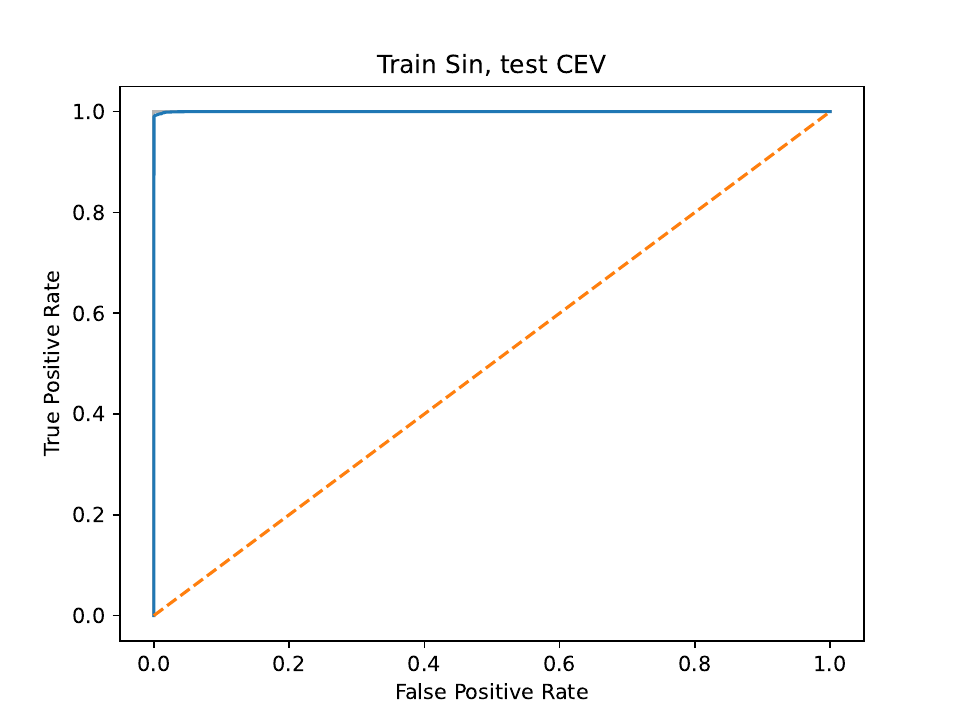}}
\subfloat{\includegraphics[width = 2.5in]{roc_train_sabr_sin_test_cev.pdf}} 
\caption{ROC curves obtained taking underlyings with initial value $x_0=2.0$ and dynamics specified in Section \ref{sec:stochvol}, equally spaced maturities in $(2,5)$ and equally spaced strikes in $(3,5.5)$. }
\label{fig:rocout}
\end{figure}}

\end{document}